\newtheorem{theorem}{Theorem}[section]
\newtheorem{lemma}[theorem]{Lemma}
\newtheorem{proposition}[theorem]{Proposition}
\newtheorem{corollary}[theorem]{Corollary}
\newtheorem{claim}[theorem]{Claim}
\newtheorem{definition}[theorem]{Definition}
\renewcommand{\paragraph}[1]{\noindent {\bf #1}}
\newenvironment{sketch}{\noindent{\bf Proof [Sketch]:~~}}{\(\qed\)}
\newcommand{\Exp}{\operatornamewithlimits{\mathbb{E}}}
\newcommand{\mcA}{{\mathcal A}}
\newcommand{\ith}[1]{{#1}^{\text{th}}}
\newcommand{\mygamma}{L}
\newcommand{\ignore}[1]{}
\begin{document}

\title{Converting Online Algorithms to Local Computation Algorithms}
\author{
Yishay Mansour\inst{1} 
\thanks{\email{mansour@post.tau.ac.il}. Supported in part by the Google Inter-university center
for Electronic Markets and Auctions, by a grant from the Israel
Science Foundation, by a grant from United States-Israel Binational
Science Foundation (BSF), by a grant from Israeli Centers of Research Excellence (ICORE), and by a grant from the Israeli Ministry
of Science (MoS).}
\and
Aviad Rubinstein\inst{1}
\thanks{ \email{aviadrub@mail.tau.ac.il}.}
\and
Shai Vardi\inst{1}
% \email{shaivar1@post.tau.ac.il}
\thanks{\email{shaivar1@post.tau.ac.il}. Supported in part by the Google Inter-university center
for Electronic Markets and Auctions }
\and
Ning Xie\inst{2}
% \email{ningxie@csail.mit.edu}
\thanks{\email{ningxie@csail.mit.edu}. Supported by NSF grants CCF-0728645, CCF-0729011 and CCF-1065125.}
 \institute{School of Computer Science, Tel Aviv University, Israel
\and CSAIL, MIT, Cambridge MA 02139, USA}
}

\date{}

\setcounter{page}{1}
\setcounter{secnumdepth}{3}
\maketitle

\begin{abstract}

We propose a general method for converting online algorithms to local computation algorithms,\footnote{For a given input $x$, \emph{local computation algorithms} support queries by a user to values of specified locations $y_i$ in a legal output $y \in F(x)$.} by selecting a random permutation of the input, and simulating running the online algorithm. We bound the number of steps of the algorithm using a \emph{query tree}, which models the dependencies between queries. We improve previous analyses of query trees on graphs of bounded degree, and extend the analysis to the cases where the degrees are distributed binomially, and to a special case of bipartite graphs.

Using this method, we give a local computation algorithm for maximal matching in graphs of bounded degree, which runs in time and space $O(\log^3{n})$.

We also show how to convert a large family of load balancing algorithms (related to balls and bins problems) to local computation algorithms. This gives several local load balancing algorithms which achieve the same approximation ratios as the online algorithms, but run in $O(\log{n})$ time and space.

Finally, we modify existing local computation algorithms for hypergraph $2$-coloring and $k$-CNF and use our improved analysis to obtain better time and space bounds, of $O(\log^4{n})$, removing the dependency on the maximal degree of the graph from the exponent.
\end{abstract}

\section{Introduction}
\label{section:introduction}

%Classic Algorithms
\subsection{Background}
The classical computation model has a single processor which has
access to a given input, and using an internal memory, computes the
output. This is essentially the von Newmann architecture, which has
been the driving force since the early days of computation.
The class of polynomial time algorithms is widely accepted as the
definition of {\em efficiently computable} problems.
Over the years many interesting variations of this basic model have
been studied, focusing on different issues.

%Online Algorithms

Online algorithms (see, e.g., \cite{BEY98}) introduce limitations in the time domain. An online
algorithm needs to select actions based only on the history it
observed, without access to future inputs that might influence its
performance. Sublinear algorithms (e.g. \cite{MR06, PR07}) limit the space domain, by
limiting the ability of an algorithm to observe the entire input,
and still strive to derive global properties of it.

% LCAs
Local computation algorithms (LCAs) \cite{RTVX11} are a variant of
sublinear algorithms.
The LCA model considers a computation problem which might have
multiple admissible solutions,  each consisting of multiple bits. The
LCA can return queries regarding parts of the output, in a consistent
way, and in poly-logarithmic time. For example, the input for an LCA
for a job scheduling problem consists of the description of $n$ jobs
and $m$ machines. The admissible solutions might be the allocations
of jobs to machines such that the makespan is at most twice the optimal
makespan. On any query of a job, the LCA answers quickly the job's machine.
The correctness property of the LCA guarantees that different query
replies will be consistent with some admissible solution.

%ICS results

%query tree
%\subsection{Query tree}

\subsection{Our results}

Following  \cite{ARV+11}, we use an abstract tree structure - \emph{query trees} to bound the number of queries performed by certain algorithms. We use these bounds to improve the upper bound the time and space requirements of several algorithms introduced in \cite{ARV+11}. We also give a generic method of transforming online algorithms to LCAs, and apply it to obtain LCAs to maximal matching and several load balancing problems.

\subsubsection{Bounds on query trees}

Suppose that we have an online algorithm where the reply to a query depends
on the replies to a small number of previous queries. The reply to each of those previous queries depends on the replies to a small number of other queries and so on. These dependencies can be used to model certain problems using \emph{query trees} -- trees which model the dependency of the replies to a given query on the replies to other queries.

Bounding the size of a query tree is central to the analyses of our algorithms.  We show that the size of the query tree is  $O(\log{n})$ w.h.p., where $n$ is the number of vertices. $d$, the degree bound of the dependency graph, appears in the constant. \footnote{Note that, however, the hidden constant is exponentially dependent on $d$. Whether or not this bound can be improved to have a polynomial dependency on $d$ is an interesting open question.} This answers in the affirmative the conjecture of \cite{ARV+11}.
Previously, Alon et al. \cite{ARV+11} show that the expected size of the query tree is constant, and  $O(\log^{d+1}{n})$ w.h.p.\footnote{Notice that bounding the expected size of the query tree is not enough for our applications, since in LCAs we need to bound the probability that \emph{any} query fails.} Our improvement is significant in removing the dependence on $d$ from the exponent of the logarithm. 
 We also show that when the degrees of the graph are distributed binomially, we can achieve the same bound on the size of the query tree. In addition, we show a trivial lower bound of $\Omega(\log{n}/\log{\log{n}})$.
 
We use these results on query trees to obtain LCAs for several online problems -- maximal matching in graphs of bounded degree and several load balancing problems. We also use the results to improve the previous algorithms for hypergraph $2$-coloring and $k$-CNF. 

\subsubsection{Hypergraph $2$-coloring}
We modify the algorithm of \cite{ARV+11} for an LCA for hypergraph $2$-coloring, and coupled with our improved analysis of query tree size, obtain an LCA which runs in time and space $O(\log^4{n})$, improving the previous result, an LCA which runs $O(\log^{d+1} n)$  time and space. 

\subsubsection{$k$-CNF}
Building on the similarity between hypergraph $2$-coloring and $k$-CNF, we  apply our results on hypergraph $2$-coloring to give an an LCA for $k$-CNF which runs in time and space $O(\log^4{n})$.\\

We use the query tree to transform online algorithms to LCAs. We simulate online algorithms as follows: first a random permutation
of the items is generated on the fly. Then, for each query, we simulate the online algorithm on a stream of input items arriving according to
the order of the random permutation. Fortunately, because of the nature of our graphs (the fact that the degree is bounded or distributed binomially), we show that in expectation, we will only need to query a constant number of nodes, and only $O(\log{n})$ nodes w.h.p. We now state our results:

\subsubsection{Maximal matching}
We simulate the greedy online algorithm for maximal matching, to derive an LCA for maximal matching which runs in time and space $O(\log^3{n})$.

\subsubsection{Load Balancing}
 We give several LCAs to load balancing problems which run in $O(\log{n})$ time and space. Our techniques include extending 
 the analysis of the query tree size to  the case where the degrees are
selected from a binomial distribution with expectation $d$, and further extending it to bipartite graphs which exhibit 
the characteristics of many balls and bins problems, specifically ones where each ball chooses $d$ bins at random. We show how to convert a large class of the ``power of $d$ choices'' 
 %online algorithms (see e.g. \cite{ABK+99} \cite{BCS+06} \cite{TW07}) to efficient
 online algorithms (see, e.g., \cite{ABK+99, BCS+06, TW07}) to efficient
LCAs.

\subsection{Related work}
Nguyen and Onak \cite{NO08} focus on transforming classical approximation algorithms into constant-time algorithms that approximate the size of the optimal solution of problems such as vertex cover and  maximum matching. They generate a random number $r \in [0, 1]$, called the rank, for each node. These ranks are used to bound the query tree size. 

Rubinfeld et al. \cite{RTVX11} show how to construct polylogarithmic time local computation algorithms to maximal independent set computations, scheduling radio network broadcasts, hypergraph coloring and satisfying k-SAT formulas. Their proof technique uses Beck's analysis in
his algorithmic approach to the Lov{\'{a}}sz Local Lemma \cite{Bec91}, and a reduction from distributed algorithms. 
Alon et al. \cite{ARV+11}, building on the technique of \cite{NO08}, show how to extend several of the algorithms of \cite{RTVX11} to perform in polylogarithmic space as well as time. They further observe that we do not actually need to assign each query  a rank, we only need a random permutation of the queries. Furthermore, assuming the query tree is bounded by some $k$, the query to any node depends on at most $k$ queries to other nodes, and so a $k$-wise independent random ordering suffices. They show  how to construct a $1/n^2$-almost $k$-wise independent random ordering\footnote{A random ordering $D_r$ is said to be \emph{$\epsilon$-almost $k$-wise independent} if the statistical distance between $D_r$ and some $k$-wise independent random ordering by at most $\epsilon$.} from a seed of length $O(k \log^2{n})$. 

%There has been much recent interest recently in fields related to LCAs, particularly in sublinear time algorithms.

%The {\em reconstruction} models described in \cite{ACC+08} \cite{SS10} \cite{JR11} describe scenarios where an input string that has a  certain property, such as monotonicity, is assumed to be corrupted at a relatively small number of locations.  The reconstruction algorithm gives fast query access to an uncorrupted version of the string that is close to the original input. 

Recent developments in sublinear time algorithms for sparse graph and
combinatorial optimization problems have led to new constant time
algorithms for approximating the size of a minimum vertex cover, maximal matching, 
maximum matching, minimum dominating set, and other problems (cf. \cite{PR07, MR06, NO08, YYI09}), by 
randomly querying a constant number of vertices.
A major difference between these algorithms and LCAs is that LCAs require that w.h.p., the 
output will be correct on any input, while optimization problems usually require a correct output only on \emph{most} inputs. More importantly, LCAs reuire a consistent output for each query, rather than only approximating a given global property.

There is a vast literature on the topic of balls and bins and the power of $d$ choices. (e.g. \cite{ABK+99, BCS+06, DR96, TW07}). For a survey on the power of $d$ choices, we refer the reader to \cite{MRR01}.

\subsection{Organization of our paper}

The rest of the paper is organized as follows:
Some preliminaries and notations that we use throughout the paper appear in Section \ref{section:preliminaries}.
In Section \ref{section:tree_size} we prove the upper bound of $O(\log{n})$ on 
the size of the query tree in the case of bounded and binomially distributed degrees.
In section \ref{hypergraph}, we use this analysis to give improved algorithms for hypergraph $2$-coloring and $k$-CNF.
In Section \ref{section:matching} we give an LCA for finding a maximal matching in graphs of bounded degree. 
Section \ref{section:load_balancing} expands our query tree result to a special case of bipartite graphs;
we use this bound for bipartite graph to convert online algorithms for balls and bins into LCAs for the same problems.
The appendices provide in-depth discussions of the hypergraph $2$-coloring and analogous $k$-CNF LCAs, and a lower bound to the query tree size.
% In section \ref{section:conclusion} we introduce some open problems and directions

\section{Preliminaries}\label{section:preliminaries}

Let $G=(V, E)$ be an undirected graph.
%The \emph{distance} between two vertices $u$ and $v$ in $V(G)$, denoted by
%$dist_{G}(u, v)$, is the length of a shortest path between the two vertices.
We denote by $N_{G}(v)=\{u\in V(G): (u,v)\in E(G)\}$ 
the neighbors of vertex $v$, and by %$N^{+}_{G}(v)=N(v)\cup \{v\}$.
$deg_{G}(v)$ we denote the degree of $v$.
When it is clear from the context, we omit the $G$ in the subscript.
Unless stated otherwise, all logarithms in this paper are to the base $2$.
% Let $\N=\{0,1,\ldots\}$ denote the set of natural numbers.
We use $[n]$ to denote the set $\{1,\ldots, n\}$, where $n\geq 1$ is a natural number. 
%%%%%%%%%%%%%%%%%%%%%%%%%%%%%%%%%%%%%%%%%%%%%%%%%%%%%%%%%%%%%%%%%%%%%%%%%%%%%%%%%%%%%%%%%%%%%%%%%%%%%

We present our model of local computation algorithms (LCAs):
Let $F$ be a computational problem and $x$ be an input to $F$.
Let $F(x) = \{y~|~ y {\rm ~is~ a~} {\rm valid~solution~}$ ${\rm for~input~} x\}$.
The {\em search problem} for $F$ is to find any $y \in F(x)$.

%Let $x$ and $F(x)$ be defined as above.
A {\em $(t(n), s(n), \delta(n))$-local computation algorithm} $\mcA$
is a (randomized) algorithm which solves a search problem for $F$ for an input $x$ of size $n$.
However, the LCA $\mcA$ does not output a solution $y\in F(x)$, but rather
implements query access to $y \in F(x)$. $\mcA$ receives a sequence of queries $i_1,\ldots,i_q$ and for any $q>0$ satisfies the following:
(1) after each query $i_j$ it produces an output $y_{i_j}$,
%\rednote{(2) the outputs $y_{i_1},\ldots, y_{i_q}$ are substrings of some $y \in F(x)$. (A LCA that satisfies this property - i.e. that its outputs are substrings of some $y \in F(x)$ - is called \emph{consistent}).
%(3) With probability at least $1-\delta(n)$ LCA $\mcA$ is consistent.}
(2) With probability at least $1-\delta(n)$ $\mcA$ is \emph{consistent}, that is, the outputs $y_{i_1},\ldots, y_{i_q}$ are substrings of some $y \in F(x)$.
(3) $\mcA$ has access to a random tape and local computation memory on which
it can perform current computations as well as store and retrieve information from previous computations.

We assume that the input $x$, the local computation tape and any
random bits used are all presented in the RAM word model, i.e.,
$\mcA$ is given the ability to access a word of any of these in one step.
The running time of $\mcA$ on any query is at most $t(n)$, which is sublinear in $n$,
and the size of the local computation memory of $\mcA$ is at most $s(n)$.
Unless stated otherwise, we always assume that the error parameter
$\delta(n)$ is at most some constant, say, $1/3$.
We say that  $\mcA$ is a {\em strongly local computation algorithm}
if both $t(n)$ and $s(n)$ are upper bounded by $O(\log^{c}n)$ for some constant $c$.

Two important properties of LCAs are as follows.
%\begin{definition}[Query oblivious\cite{ARV+11}]
We say an LCA $\mcA$ is \emph{query order oblivious} (\emph{query oblivious} for short)
if the outputs of $\mcA$ do not depend on the order of the queries but
depend only on the input and the random bits generated on the random tape of $\mcA$.
%\end{definition}
%
%\begin{definition}[Parallelizable\cite{ARV+11}]
We say an LCA $\mcA$ is \emph{parallelizable} if $\mcA$
supports parallel queries, that is $\mcA$ is able to answer multiple queries
simultaneously so that all the answers are consistent.
%\end{definition}

\section{Bounding the size of a random query tree}
\label{section:tree_size}

\subsection{The problem and our main results}\label{Sect:query_tree}

In online algorithms, queries arrive in some unknown order, and the reply to each query depends only on previous queries (but not on any future events). The simplest way to transform online algorithms to LCAs is to process the queries in the order in which they arrive. This, however, means that we have to store the replies to all previous queries, so that even if the time to compute each query is polylogarithmic, the overall space is linear in the number of queries. Furthermore, this means that the resulting LCA is not query-oblivious. The following solution can be applied to this problem (\cite{NO08} and \cite{ARV+11}): Each query $v$ is assigned a random number,  $r(v) \in [0,1]$, called its \emph{rank}, and the queries are performed in ascending order of rank. Then, for each query $x$, a query tree can be constructed, to represent the queries on which $x$ depends. If we can show that the query tree is small, we can conclude that each query does not depend on many other queries, and therefore a small number of queries need to be processed in order to reply to query $x$.  We formalize this as follows:

Let $G=(V,E)$ be an undirected graph. The vertices of the graph represent queries, and the edges represent the dependencies between the queries.
A real number $r(v) \in [0,1]$ is assigned independently and uniformly at random to every vertex
$v \in V$; we call $r(v)$ the \emph{rank} of $v$. This models the random permutation of the vertices.
Each vertex $v \in V$  holds an input $x(v) \in R$, where the range $R$ is some finite set. The input is the content of the query associated with $v$.
A randomized function $F$ is defined inductively on the vertices of $G$ such that
$F(v)$ is a (deterministic) function of $x(v)$ as well as the values of $F$ at the neighbors $w$ of $v$
for which $r(w) < r(v)$. $F$ models the output of the online algorithm.
We would like to upper bound the number of queries to vertices in the graph needed in order to compute $F(v_0)$ for any vertex $v_0 \in G$, namely, the time to simulate the output of query $v_0$ using the online algorithm.

To upper bound the number of queries to the graph, we turn to a simpler task of bounding the size
of a certain $d$-regular tree, which is an upper bound on the number of queries.
Consider an infinite $d$-regular tree $\mathcal{T}$ rooted at $v_{0}$.
Each node $w$ in $\mathcal{T}$ is assigned independently and uniformly at random a real number
$r(w)\in [0,1]$.
For every node $w$ other than $v_{0}$ in $\mathcal{T}$,
let $\text{parent}(w)$ denote the \text{parent} node of $w$.
We grow a (possibly infinite) subtree $T$ of $\mathcal{T}$ rooted at $v$ as follows:
a node $w$ is in the subtree $T$ if and only if $\text{parent}(w)$ is in $T$ and
$r(w)<r(\text{parent}(w))$ (for simplicity we assume all the ranks are distinct real numbers).
That is, we start from the root $v_0$, add all the children of $v_0$ whose ranks are smaller than that
of $v_0$ to $T$. We keep growing $T$ in this manner where a node $w'\in T$ is a leaf node in $T$
if the ranks of its $d$ children are all larger than $r(w')$.
We call the random tree $T$ constructed in this way a \emph{query tree} and
we denote by $|T|$ the random variable that corresponds to the size of $T$.
Note that $|T|$ is an upper bound on the number of queries since
each node in $T$ has at least as many neighbors as that in $G$ and
if a node is connected to some previously queried nodes, this can only decrease
the number of queries. Therefore the number of queries is bounded by the size of $T$.
Our goal is to find an upper bound on $|T|$ which holds with high probability.

We improve the upper bound on the query tree of $O(\log^{d+1}{N})$ given in \cite{ARV+11}  for the case when the degrees are bounded by a constant $d$
and extend our new bound to the case that the degrees of $G$ are binomially distributed, independently and identically with expectation $d$, i.e.,
$deg(v) \sim B(n, d/n)$.

Our main result in this section is bounding, with high probability, the size of the query tree $T$ as follows.

%%%%%%%%%%%
% ICALP The complete proof is in the extended version of the paper \ref{subtree_app}.
%%%%%%%%%%

\begin{lemma}
\label{lemma:boundomial}
Let $G$ be a graph whose vertex degrees are bounded by $d$ or distributed independently and identically from the binomial distribution:  $deg(v) \sim B(n,d/n)$. Then there exists a constant %$C(d)=2^{O(d)}$ 
$C(d)$ which depends only on $d$, such that
\begin{center}
$\Pr[|T|>C(d)\log{n}]<1/n^2$,
\end{center}
where the probability is taken over all the possible permutations $\pi \in \Pi$ of the vertices of $G$, and $T$ is a random query tree in $G$ under $\pi$.
\end{lemma}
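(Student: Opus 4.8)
The plan is to analyze the query tree $T$ directly via a level-by-level (breadth-first) exploration and bound $|T|$ by controlling how the "surviving" population evolves from one level to the next. Fix the root $v_0$ and condition on $r(v_0) = \rho$. A node $w$ at depth $k$ is in $T$ iff there is a path $v_0 = w_0, w_1, \dots, w_k = w$ down the tree with strictly decreasing ranks $r(w_0) > r(w_1) > \cdots > r(w_k)$. The key observation, used already in the literature, is that if $w$ is in $T$ with $r(\mathrm{parent}(w)) = t$, then each of its (up to $d$) children survives independently with probability $t$, and conditioned on surviving, the child's rank is uniform on $[0,t]$. So the process of ranks along surviving branches is a multiplicative random-decay process. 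I would first handle the bounded-degree case and then note the binomial case follows because $B(n,d/n)$ is stochastically dominated (in the relevant sense) by a fixed branching structure with only a constant-factor loss in the effective branching number — more precisely, I would redo the same computation with the number of children being itself random with mean $d$, which only changes constants.

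The heart of the argument is to show that the expected number of nodes at depth $k$ decays geometrically once $k$ exceeds $c\log n$, and simultaneously that the total size has an exponential tail. Let $Z_k$ be the number of nodes of $T$ at depth $k$ and, more usefully, track the "potential" $\Phi_k = \sum_{w \in T, \mathrm{depth}(w)=k} r(w)$. A direct computation gives $\Exp[\Phi_{k+1} \mid \Phi_k] \le d \cdot \Exp[\text{(child rank)} \mid \text{survival}] \cdot (\text{survival prob weighted appropriately})$; carrying the bookkeeping through, one finds $\Exp[\Phi_{k+1}\mid\mathcal{F}_k] = \tfrac{d}{2}\,\Exp[\sum_{w} r(w)^2 \mid \mathcal{F}_k]$-type relations, and the cleanest route is: $\Exp[\text{number of depth-}k\text{ survivors}] = \Exp\big[\sum_{\text{paths}} \Pr[\text{ranks strictly decreasing along path}]\big] = d^k/k!$ (integrating out $k$ independent uniforms with the ordering constraint, times $d^k$ choices of path). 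Hence $\Exp[|T|] = \sum_{k\ge0} d^k/k! = e^d$, recovering the constant-expectation bound of \cite{ARV+11}. To upgrade to a high-probability bound I would not use Markov on $|T|$ directly (too weak), but instead: (i) observe that for $k_0 = \lceil 2e d \log n\rceil$ the expected number of depth-$k_0$ survivors is $d^{k_0}/k_0! \le (ed/k_0)^{k_0} \le 2^{-k_0} \le n^{-2}$, so by Markov's inequality w.h.p. the tree has no node at depth $\ge k_0$; (ii) conditioned on depth $< k_0$, bound the total size by a supermartingale / Chernoff argument on the generation sizes, or more simply by noting each generation beyond a certain point shrinks in expectation by a constant factor and applying a union bound over the $O(\log n)$ levels together with an exponential tail for each level's contribution.

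The main obstacle I anticipate is step (ii): turning the geometric decay of the \emph{expectation} of each generation into an exponential tail for the \emph{sum}, since the generation sizes are not independent and a node deep in the tree can have an atypically large rank making its subtree large. I would address this by a stopping-time / exposure martingale argument: reveal ranks in BFS order, and show that $M_k = \sum_{w \text{ alive at depth} \ge k} (\text{something like } r(w) \cdot \lambda^{-k})$ for a suitable $\lambda \in (1/2,1)$ depending on $d$ is a supermartingale, then apply a maximal inequality; alternatively, dominate the whole query tree by a Galton–Watson-type branching process whose offspring distribution has mean $< 1$ after truncating at depth $O(\log n)$, for which exponential tails on the total progeny are standard. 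Either way the constant $C(d)$ comes out exponential in $d$, consistent with the footnote in the statement. Finally, a union bound over the (at most $n$) possible roots $v_0$ is \emph{not} needed for the lemma as stated (it fixes a single query tree), but if one wants the statement "for all queries simultaneously" one pays an extra factor of $n$, absorbed by strengthening $C(d)$ or the exponent; I would remark on this but keep the proof focused on the single-root bound with failure probability $n^{-2}$.
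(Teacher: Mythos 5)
Your computation $\Exp[|T|]=\sum_k d^k/k!=e^d$ and the depth bound via $d^{k_0}/k_0!\le n^{-2}$ are both correct, and you have honestly identified the real obstacle: converting geometric decay of \emph{expectations} per generation into an exponential tail on $\sum_k Z_k$, in the presence of dependence and the possibility that a deep node with an atypically high rank seeds a near-critical subtree. But neither of your two proposed fixes actually closes that gap. For the supermartingale $M_k=\lambda^{-k}\sum_{\mathrm{depth}(w)=k} r(w)$, the one-step drift is $\Exp[\sum_{\text{children of }w} r \mid r(w)=t]=dt^2/2$, so the supermartingale inequality $\lambda^{-(k+1)}dt^2/2\le \lambda^{-k}t$ requires $t\le 2\lambda/d$ --- it fails precisely for the high-rank nodes, which are the dangerous ones. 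For the Galton--Watson domination, truncating at depth $O(\log n)$ does nothing to push the offspring mean below $1$: a node with rank $t$ has expected $dt$ surviving children, which is close to $d>1$ when $t$ is close to $1$, and depth truncation only caps how long this can persist, not how wide the tree gets in the meantime. So as written the argument does not go through.

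The ingredient you are missing is the paper's partition of the \emph{rank interval}, not the depth. The proof splits $[0,1]$ into $\mygamma=3d$ sub-intervals $I_1,\dots,I_{\mygamma}$ and grows the tree level-by-level in rank value: within a single interval $I_i$, a child survives \emph{and} stays in the same rank-interval with probability at most $1/\mygamma=1/(3d)$, so each within-level subtree is a strictly subcritical branching process (offspring mean $\le 1/3$), and Proposition~\ref{lemma:bounded} (bounded degree, from~\cite{ARV+11}) or Proposition~\ref{lemma:binomial} (binomial degree, via Otter's theorem) gives the exponential tail $\Pr[|T_i^{(j)}|\ge n]\le e^{-c_0 n}$. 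The second, new ingredient is Proposition~\ref{infinite}: given $Y_k$ independent subtrees with this exponential tail rooted at the nodes of the previous levels, a composition-counting bound $\binom{Z_{k+1}+Y_k}{Y_k}$ on the number of size vectors $(z_1,\dots,z_{Y_k})$ shows $\Pr[Z_{k+1}\ge \eta Y_k]\le e^{-c'\eta Y_k}$, so each rank-level multiplies the running total by at most a constant. Since there are only $\mygamma=O(d)$ rank-levels (a constant), the final bound is $O((2\eta)^{\mygamma}\log n)=O(\log n)$. Your depth-based stratification cannot be a substitute here because the number of depth-levels is $\Theta(\log n)$, not $O(1)$, and each depth-level is not subcritical. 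Finally, your treatment of the binomial case (``dominated by a fixed branching structure with constant-factor loss'') is too vague to certify the exponential tail; the paper handles it by explicitly computing the generating function $f(s)=(1-q+qs)^n$ of $B(n,d/(n\mygamma))$, solving $af'(a)=f(a)$, checking $\alpha=a/f(a)>1$ and $f''(a)=\Theta(1)$, and invoking Otter's theorem.
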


\subsection{Overview of the proof}

Our proof of Lemma \ref{lemma:boundomial} consists of two parts. Following~\cite{ARV+11}, we partition the query tree into levels.
The first part of the proof is an upper bound on the size of a single (sub)tree on any level.
For the bounded degree case, this was already proved in ~\cite{ARV+11} (the result is restated as Proposition~\ref{lemma:bounded}).

%%%ICALP

We extend the proof of ~\cite{ARV+11} to the binomially distributed degrees case. % ~\ref{binomial_proof}.
%\ref{subtree_app}.
In both cases the bound is that with high probability each subtree is of size at most logarithmic in the size of the input.

The second part, which is a new ingredient of our proof,
inductively upper bounds the number of vertices on each level, as the levels increase.
For this to hold, it crucially depends on the fact that all subtrees are generated independently and that the probability of any subtree being large is exponentially small.
The main idea is to show that although each subtree, in isolation, can reach a logarithmic size, their combination is not likely to be much larger. We use the distribution of the sizes of the subtrees, in order to bound the aggregate of multiple subtrees.

%The proof of Lemma \ref{infinite} can be found in Appendix \ref{infinite_proof}.

\subsection{Bounding the subtree size}

%\rednote{We assume that every node in the query tree is unique, i.e. no vertex appears twice. Clearly, this can only increase the size of the tree\footnote{If a vertex $v'$ was reached, and $x(v')$ queried, there is no need to query it again to calculate $F(v_0)$.
%The same applies to all of $v'$'s subtree}.}

As in \cite{ARV+11}, we partition the query tree into levels and then upper bound
the probability that a subtree is larger than a given threshold.
%the number of subtrees of different sizes on each level separately.
Let $\mygamma>1$ be a function of $d$ to be determined later.
First, we partition the interval [0,1] into $\mygamma$ sub-intervals:
$I_i = (1-\frac{i}{\mygamma+1}, 1-\frac{i-1}{\mygamma+1}]$,
for $i=1,2,\cdots, \mygamma$ and $I_{\mygamma+1} = [0, \frac{1}{\mygamma+1}]$.
We refer to interval $I_i$ as \emph{level} $i$.
A vertex $v \in T$ is said to be on level $i$ if $r(v) \in I_i$.
We consider the worst case, in which $r(v_0) \in I_1$.
In this case, the vertices on level $1$ form a tree $T_1$ rooted at $v_0$.
Denote the number of (sub)trees on level $i$  by $t_i$.
The vertices on level $2$ will form a set of trees $\{T_2^{(1)}, \cdots, T_2^{(t_2)}\}$,
where the total number of subtrees is at most the sum of the children of all the vertices in $T_1$
(we only have inequality because some of the children of the vertices of $T_1$
may be assigned to levels $3$ and above.)
The vertices on level $i>1$ form a set of subtrees $\{T_i^{(1)}, \cdots T_i^{(t_{i})}\}$.
Note that all these subtrees $\{T_{i}^{(j)}\}$ are generated independently by the same stochastic process, as the ranks of all nodes in $\mathcal{T}$ are i.i.d. random variables. In the following analysis, we will set $\mygamma = 3d$.

%To prove the theorems, we will need the following lemmas:
%\subsubsection{Bounded degree case}

For the bounded degree case, bounding the size of the subtree follows from \cite{ARV+11}:%, and is based on the following lemma.

%\begin{lemma} \label{lemma:arv}
%For $\mygamma = d+1$, for any $1 \leq i \leq \mygamma$ and any $1 \leq j \leq t_i$,
%$\Pr[|T_i^{(j)}|\geq n]\leq \sum_{i=n}^{\infty}2^{-ci}\leq 2^{-\Omega(n)}$,
%for all $n \geq \beta$, where $\beta$ is some constant.
%\end{lemma}

%This immediately establishes the following lemma, since the worse case is $\mygamma=d+1$.

\begin{proposition} [\cite{ARV+11}] \footnote{In \cite{ARV+11}, this lemma is proved for the case of $\mygamma=d+1$. This immediately establishes Proposition \ref{lemma:bounded}, since the worse case is $\mygamma=d+1$.}
\label{lemma:bounded}
Let $L\geq d+1$ be a fixed integer
and let $\mathcal{T}$ be the $d$-regular infinite query tree.
Then for any $1\leq i \leq \mygamma$ and $1 \leq j \leq t_i$,
$\Pr[|T_i^{(j)}|\geq n]\leq \sum_{i=n}^{\infty}2^{-ci}\leq 2^{-\Omega(n)}$,
for all $n \geq \beta$, where $\beta$ is some constant.
In particular, there is an absolute constant $c_{0}$ depending on $d$ only such that
for all $n\geq 1$,
\[
\Pr[|T_i^{(j)}|\geq n]\leq e^{-c_{0}n}.
\]
\end{proposition}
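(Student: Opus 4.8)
The plan is to analyze the branching process that generates a single level-subtree $T_i^{(j)}$ and show its size has an exponentially decaying tail. First I would set up the recursive structure: a subtree rooted at some node $w$ with $r(w) \in I_i$ contains exactly those descendants reachable by a path along which ranks strictly decrease but all stay within the interval $I_i$. Since $I_i$ has length $1/(\mygamma+1)$, and conditioned on being in $I_i$ a rank is uniform on that interval, the key observation is that each node $w$ in the subtree passes each of its $d$ children into the subtree only if that child's rank lies in $I_i$ \emph{and} is below $r(w)$. The probability of a given child being included is at most the probability its rank falls in $(1-i/(\mygamma+1), r(w))$, which is at most $1/(\mygamma+1)$; averaging over $r(w)$ uniform in $I_i$ gives expected number of included children at most $d/(2(\mygamma+1)) \le d/(2(3d+1)) < 1/6$. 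So the subtree is dominated by a subcritical Galton--Watson process with offspring mean strictly below $1$.

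The main technical step is to turn this subcriticality into the claimed tail bound $\Pr[|T_i^{(j)}| \ge n] \le \sum_{i=n}^\infty 2^{-ci}$. I would do this by a standard encoding/counting argument: the number of rooted subtrees of the infinite $d$-ary tree with exactly $n$ nodes is at most $(ed)^n$ (Catalan-type bound), and each fixed such shape $S$ is realized with probability at most the probability that, for a suitable nested family of rank constraints, all $n$ ranks fall in line — which can be bounded by something like $(1/(\mygamma+1))^{n}$ times a combinatorial factor coming from the ordering constraints along root-to-node paths. Choosing $\mygamma = 3d$ makes $ed/(\mygamma+1)$ bounded by a constant strictly less than $1$, say $e/4$ for large $d$, so the per-shape-times-count product is at most $\rho^n$ for some $\rho < 1$, and summing the geometric series gives $\Pr[|T_i^{(j)}| = n] \le \rho^n$, hence $\Pr[|T_i^{(j)}| \ge n] \le \sum_{k \ge n}\rho^k = O(\rho^n)$. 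Writing $\rho = 2^{-c}$ yields the stated form, and absorbing the constant from the small values of $n$ (those below the threshold $\beta$ where the clean bound kicks in) into a possibly smaller $c_0$ gives the ``in particular'' clause $\Pr[|T_i^{(j)}| \ge n] \le e^{-c_0 n}$ valid for all $n \ge 1$.

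The hard part will be getting the counting of realized subtree shapes tight enough that the base of the exponential genuinely drops below $1$ rather than merely being a constant; a naive union bound over shapes loses a factor of $d^n$, so the savings from restricting ranks to an interval of width $1/(3d+1)$ has to be played off carefully against the $\approx 4^n$ shape count and the $d^n$ choice-of-children count. The clean way is to expose ranks along a DFS of the candidate subtree and observe that at each of the $n$ steps the newly exposed node is included only with conditional probability at most $1/(\mygamma+1)$ given the path ranks above it, so the whole shape has realization probability at most $(\mygamma+1)^{-n}$ outright, and then only the $\binom{dn}{n}$-type shape count (which is $(O(d))^n$) multiplies it — here the constant in front of $d$ inside this $O(\cdot)$ is an absolute constant, so $\mygamma + 1 = 3d+1$ beats it for $d$ large enough, and for small $d$ one can just take $\mygamma$ a larger constant multiple of $d$. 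Since Proposition~\ref{lemma:bounded} is quoted from \cite{ARV+11}, I would in the write-up simply cite their argument for the regular case and present the above only as the scheme one follows, reserving the detailed branching-process computation for the binomial-degree extension where the offspring distribution $B(n,d/n)$ replaces the fixed fan-out $d$ but the same subcriticality at $\mygamma = 3d$ persists because $\mathbb{E}[\deg(v)]/(\mygamma+1) = d/(3d+1) < 1/3$.
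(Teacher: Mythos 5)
Your proposal is correct and reaches the stated bound, but it takes a genuinely different route from the one the paper (following \cite{ARV+11}) uses. The paper's own treatment of this proposition is a citation: \cite{ARV+11} proves the $\mygamma = d+1$ case via Galton--Watson theory, and the present paper's Proposition~\ref{lemma:binomial} (the binomial extension) shows what that argument looks like -- dominate $T_i^{(j)}$ by the process $T'$ that drops the rank-decreasing requirement and keeps only $r(w)\in I_i$, recognize $T'$ as a subcritical Galton--Watson process, and invoke Otter's theorem (Theorem~\ref{thm:Otter}) to get $\Pr[|T'|=n]=O(\alpha^{-n}n^{-3/2})$ with $\alpha>1$. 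You instead propose a direct first-moment/encoding argument: union-bound over the at most $(ed)^n$ rooted plane subtrees of the infinite $d$-ary tree with $n$ nodes, and bound each shape's realization probability by $(\mygamma+1)^{-(n-1)}$ via a DFS exposure of ranks (each newly exposed node must land in $I_i$, regardless of what the path has revealed so far). With $\mygamma=3d$ this gives $(3d+1)\bigl(ed/(3d+1)\bigr)^n$, and $ed/(3d+1)\le e/3<1$ for every $d\ge 1$, so in fact you do not even need the ``for $d$ large enough'' hedge or a larger constant multiple of $d$. Your initial averaging remark (``expected included children $\le d/(2(\mygamma+1))$'') is not quite a rigorous step -- for a non-root node the conditional law of $r(w)$ within $I_i$ is not uniform -- but you correctly discard it in favor of the exposure argument, which sidesteps the issue entirely by only using the unconditional $\Pr[r(w)\in I_i]\le 1/(\mygamma+1)$ bound. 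What each approach buys: your counting proof is elementary and self-contained, needing no branching-process machinery; the Galton--Watson/Otter route is what the authors actually reuse for the binomial-degree case (where the offspring distribution $B(n,d/n)$ makes the shape-count argument messier), and it yields the sharper $n^{-3/2}$ polynomial correction, though that extra precision is not needed for the $2^{-\Omega(n)}$ tail. Your instinct to cite \cite{ARV+11} for the bounded-degree case and spend effort only on the binomial extension matches exactly what the paper does.
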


% This lemma follows directly from the following result by Alon et. al \cite{ARV+11}, as
% the tree size cannot be larger when the interval gets smaller.

\subsubsection{The binomially distributed degrees case}
\label{subtree_app}

We are interested in bounding the subtree size also in the case that the degrees are not a constant $d$, but rather selected independently and identically from a binomial distribution with mean $d$.
%The proof of this case is fairly similar to the bounded degree case, and can be found in Appendix \ref{binomial_proof}.
\begin{proposition}
\label{lemma:binomial}
 Let $\mathcal{T}$ be a tree with vertex degree distributed i.i.d. binomially with $deg(v) \sim B(n,d/n)$. For any $1 \leq i \leq \mygamma$ and any $1 \leq j \leq t_i$,
$\Pr[|T_i^{(j)}|\geq n]\leq \sum_{i=n}^{\infty}2^{-ci}\leq 2^{-\Omega(n)}$,
 for $n \geq \beta$, for some constant $\beta > 0$.
\end{proposition}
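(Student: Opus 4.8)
The plan is to reduce the binomial case to the bounded-degree case of Proposition~\ref{lemma:bounded} by a coupling/truncation argument, combined with a direct branching-process estimate to control the contribution of the rare high-degree vertices. First I would recall the key structural fact: a subtree $T_i^{(j)}$ is obtained from the infinite tree $\mathcal{T}$ by the rank-percolation process restricted to one of the $\mygamma+1$ rank-intervals, so in particular, conditioned on a node $w$ lying on level $i$, each child of $w$ is independently included with probability at most $p := 1/(\mygamma+1) = 1/(3d+1)$ (the child must have smaller rank and must also land in the same level). Thus $T_i^{(j)}$ is stochastically dominated by a Galton--Watson tree whose offspring distribution is $\mathrm{Bin}(D, p)$, where $D \sim B(n, d/n)$ is the (random) degree. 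The mean offspring is $\E[D]\cdot p = d/(3d+1) < 1/3$, so the process is strongly subcritical, which is exactly the regime where the total progeny has an exponential tail.

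The main step is to establish the exponential tail for the total progeny $Z$ of this Galton--Watson tree. I would do this via the standard generating-function / exponential-moment method: letting $f(s) = \E[s^{\mathrm{Bin}(D,p)}] = \E\big[(1-p+ps)^D\big] = \big(1 - \tfrac{d}{n}(1-(1-p+ps)) \big)^n = \big(1 - \tfrac{dp}{n}(1-s)\big)^n$, the total-progeny generating function $g(s) = \E[s^Z]$ satisfies the fixed-point equation $g(s) = s\cdot f(g(s))$. Since $f'(1) = dp < 1$, a routine argument (the same one underlying the classic bound for subcritical branching processes, e.g.\ via Dwass's formula or a direct analytic continuation) shows there is a constant $s_0 > 1$ with $g(s_0) < \infty$, which yields $\Pr[Z \ge n] \le g(s_0)\, s_0^{-n} = 2^{-\Omega(n)}$. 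Because $f$ here is dominated uniformly in $n$ by the Poisson-offspring generating function $e^{dp(s-1)}$ (indeed $(1-\tfrac{dp}{n}(1-s))^n \le e^{dp(s-1)}$ for $s \ge 1$), all the constants can be made to depend on $d$ only and not on $n$; this is the point that needs the most care. Summing the geometric tail gives the stated form $\sum_{i=n}^\infty 2^{-ci} \le 2^{-\Omega(n)}$ for $n \ge \beta$.

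The main obstacle I anticipate is precisely the uniformity in $n$: a priori the binomial degree $B(n,d/n)$ is unbounded, and one must make sure the (exponentially rare) large-degree vertices do not spoil the exponential tail. The clean way around this is the Poisson domination remarked above — every moment-generating quantity for $\mathrm{Bin}(D,p)$ with $D\sim B(n,d/n)$ is bounded by the corresponding Poisson($d$) quantity — so effectively the binomial case is no worse than a Galton--Watson tree with $\mathrm{Poisson}(dp)$ offspring, for which the subcritical total-progeny tail (a Borel-type distribution) is classical and has all the constants depending on $dp$ alone. An alternative, if one prefers to stay closer to \cite{ARV+11}, is to truncate: condition on the event that every node within the relevant window has degree at most $d' = O(d\log n)$ (which fails with probability $n^{-\omega(1)}$ by a Chernoff bound and a union bound over the at most $\mathrm{poly}(n)$ relevant nodes), and on this event apply Proposition~\ref{lemma:bounded} with $d'$ in place of $d$; however this degrades the constant $c$ by a $\log n$ factor, so the generating-function route giving a clean $d$-dependent constant is preferable, and is what I would present.
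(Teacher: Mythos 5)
Your proposal is correct and reaches the same conclusion, but by a somewhat different route through branching-process theory than the paper. Both arguments begin identically: drop the rank-monotonicity constraint so that the subtree is dominated by a Galton--Watson tree with i.i.d.\ offspring, compute the offspring law as (effectively) $B(n, dp/n)$ with $p\approx 1/\mygamma$, and note the process is strictly subcritical (mean $d/\mygamma<1$). Where you diverge is in how the exponential tail for the total progeny $Z$ is obtained. The paper invokes Otter's theorem, which gives a precise asymptotic $\Pr[Z=n]=\Theta(\alpha^{-n}n^{-3/2})$ once one verifies the root equation $af'(a)=f(a)$ has a suitable solution; the paper then explicitly computes $a$, shows $\alpha \ge 3/e^{2/3}>1$ uniformly in $n$, and checks $f''(a)=\Theta(1)$ to validate the hypotheses. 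You instead work with the fixed-point equation $g(s)=s\,f(g(s))$ for the total-progeny generating function, argue $g(s_0)<\infty$ for some $s_0>1$, and deduce $\Pr[Z\ge n]\le g(s_0)s_0^{-n}$; you handle the $n$-uniformity by the clean observation that $(1-\tfrac{dp}{n}(1-s))^n\le e^{dp(s-1)}$ for $s\ge 1$, so the binomial offspring is dominated by Poisson$(dp)$ and all constants depend only on $d$. Your route is more elementary (no need to verify Otter's regularity hypotheses or compute the algebraic root) and gives a cleaner account of why the constants are $n$-independent; the paper's route via Otter is sharper, yielding the exact polynomial correction $n^{-3/2}$, which is unnecessary for the stated bound. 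Either argument is adequate, and yours would be a reasonable substitute.
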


%\subsubsection{Binomially distributed degrees case}\label{Sect:binom_single}

\begin{proof}
The proof of Proposition~\ref{lemma:binomial} is similar to the proof of Proposition ~\ref{lemma:bounded} in \cite{ARV+11};
we employ the theory of Galton-Watson processes.
For a good introduction to Galton-Watson branching processes see e.g. \cite{Har63}.

Consider a Galton-Watson process defined by the probability function
$\mathbf{p}:=\{p_{k}; k=0,1,2, \ldots \}$, with $p_{k}\geq 0$ and $\sum_{k}p_{k}=1$.
Let $f(s)=\sum_{k=0}^{\infty}p_{k}s^{k}$
be the generating function of $\mathbf{p}$.
For $i=0, 1, \ldots,$ let $Z_{i}$ be the number of offsprings in the $\ith{i}$ generation.
Clearly $Z_{0}=1$ and $\{Z_{i}: i=0, 1, \ldots \}$ form a Markov chain.
Let $m:=\Exp[Z_{1}]=\sum_{k}kp_{k}$ be the expected number of children of any individual.
Let $Z=Z_{0}+Z_{1}+\cdots$ be the sum of all offsprings in all generations of the Galton-Watson process.
The following result of Otter is useful in bounding the probability that $Z$ is large.

\begin{theorem}[\cite{Ott49}]\label{thm:Otter}
Suppose $p_{0}>0$ and that there is a point $a>0$ within the circle of convergence of $f$
for which $af'(a)=f(a)$. Let $\alpha=a/f(a)$.
% Let$t$ be the largest integer such that $p_r>0$ implies $r$ is a multiple of $t$.
Let $t=\mathrm{gcd}\{r: p_{r}>0\}$, where $\mathrm{gcd}$ stands for greatest common divisor.
Then

\begin{align*} \label{eqn:Otter}
\Pr[Z=n]=\begin{cases}
  t\left(\frac{a}{2\pi \alpha f''(a)} \right)^{1/2}\alpha^{-n}n^{-3/2} +O(\alpha^{-n}n^{-5/2}),
    &  \text{if $n \equiv 1 \pmod{t}$;} \\
  0, &  \text{if $n \not\equiv 1 \pmod{t}$.}
\end{cases}
\end{align*}
In particular, if the process is \emph{non-arithmetic}, i.e. $\mathrm{gcd}\{r: p_{r}>0\}=1$,
and $\frac{a}{\alpha f''(a)}$ is finite, then
\[
\Pr[Z=n]=O(\alpha^{-n}n^{-3/2}),
\]
and consequently $\Pr[Z\geq n]=O(\alpha^{-n})$.
\end{theorem}

%%%%%%%%%%%%%%%%%%%%%%%%%%%%%%%%%%%%%%%%%%%%%%%%%%%%%%%%%%%%%%%%%%%%%%%%%%%%%%%%%%%%%%%%%%%%%%%%%%%%%

We prove Proposition \ref{lemma:binomial} for the case of tree $T_{1}$ -- the proof actually applies
to all subtrees $T_i^{(j)}$.
Recall that $T_{1}$ is constructed inductively as follows:
for $v \in N(v_{0})$ in $\mathcal{T}$, we add $v$ to $T_{1}$ if
$r(v)<r(v_{0})$ and $r(v) \in I_{1}$.
Then for each $v$ in $T_{1}$, we add the neighbors $w \in N(v)$ in $\mathcal{T}$
to $T_{1}$ if $r(w)<r(v)$ and $r(w) \in I_{1}$.
We repeat this process until there is no vertex that can be added to $T_{1}$.

Once again, we work with the worst case that $r(v_{0})=1$.
To upper bound the size of $T_{1}$, we consider a related random process which also
grows a subtree of $T$ rooted at $v_{0}$, and denote it by $T'_{1}$.
The process that grows $T'_{1}$ is the same as that of $T_{1}$
except for the following difference: if $v\in T'_{1}$ and $w$ is a child vertex of $v$ in $\mathcal{T}$,
then we add $w$ to $T'_{1}$ as long as $r(w) \in I_{1}$. In other words, we give up the requirement that $r(w)<r(v)$.
Clearly, we always have $T_{1}\subseteq T'_{1}$ and hence $|T'_{1}|\geq |T_{1}|$.\\

Note that the random process that generates $T'_{1}$ is in fact a Galton-Watson process,
as the rank of each vertex in $T$ is independently and uniformly distributed in $[0,1]$.
We take vertex $v$ to be the parent node. Since $|I_{1}|=1/\mygamma$,
then for any vertex $u\in V(G)$, $u\neq v$,
the probability that $u$ is a child node of $v$ in $T'_{1}$ is
\[
d/n \cdot 1/\mygamma = d/n\mygamma,
\]
as the random process that connects $w$ to $v$ and
the random process that generates the rank of $w$ are independent
(each edge is chosen with probability $d/n$, and the probability that $r(w)$ is in $I_v = 1/\mygamma$).
It follows that we have a binomial distribution for the number of child nodes of $v$ in $T'_{1}$:
\[
\mathbf{p}=\{(1-q)^n, \binom{n}{1}q(1-q)^{n-1}, \binom{n}{2}q^{2}(1-q)^{n-2}, \ldots, q^{n}\},
\]
where $q:=d/n\mygamma$ is the probability that a child vertex in $T$
appears in $T'_{1}$ when its parent vertex is in $T'_{1}$.
Note that the expected number of children of a vertex in $T'_{1}$ is $nq=d/\mygamma<1$,
so from the classical result on the extinction probability of Galton-Watson processes (see e.g. \cite{Har63}),
the tree $T'_{1}$ is finite with probability one.

The generating function of $\mathbf{p}$ is
\[
f(s)=(1-q+qs)^{n},
\]
as the probability function $\{p_{k}\}$ obeys the binomial distribution $p_{k}=\Pr[X=k]$
where $X\sim B(n,q)$.
In addition, the convergence radius of $f$ is $\rho=\infty$
since $\{p_{k}\}$ has only a finite number of non-zero terms.

\[
f'(s) = nq(1-q+qs)^{n-1}
\]

Solving the equation $af'(a)=f(a)$ yields
$anq(1-q+qa)^{n-1} =  (1-q+qa)^{n}$ and hence
$anq = 1-q+qa $.
Consequently, solving for $a$ gives
\begin{align*}
a& = \frac{1-q}{q(n-1)}\\
& = \frac{1-d/n\mygamma}{d(n-1)/n\mygamma}\\
& = \frac{n\mygamma - d}{d(n-1)} \\
& = \frac{3n-1}{n-1}.
\end{align*}

We can lower bound $\alpha$ as
\begin{align*}
\alpha& = \frac{a}{f(a)}=\frac{1}{f'(a)}\\
& =  \frac{\mygamma}{d (1-q+q\frac{3n-1}{n-1})^{n-1}}\\
& = \frac{3}{\left(1+\frac{2}{3(n-1)}\right)^{n-1}}\\
& \geq  \frac{3}{e^{2/3}} > 1.
\end{align*}

Finally we calculate $f''(a)$:
\begin{align*}
f''(a) &= q^2n(n-1)(1-q + qa)^{n-2}\\
&= \frac{d^2 n(n-1)}{n^2\mygamma^2}
  \left(1-\frac{d}{n\mygamma} + \frac{d}{n\mygamma}\cdot \frac{3n-1}{n-1} \right)^{n-2}\\
&=\frac{n-1}{9n}\left(1+\frac{2}{3(n-1)}\right)^{n-2} \\
&=\Theta(1),
\end{align*}
therefore $\frac{a}{\alpha f''(a)}$ is a bounded constant.

% Now we plug all the calculations into Theorem \ref{thm:Otter}:
% \begin{eqnarray*}
% \Pr[Z=n] &=& \left(\frac{a}{2\pi \alpha f''(a)} \right)^{1/2}\alpha^{-n}n^{-3/2} +O(\alpha^{-n}n^{-5/2})\\
% &\sim& const \cdot \left(\frac{\mygamma \cdot \mygamma^2}{\mygamma D^2} \right)^{1/2}
     % \frac{\mygamma}{eD}^{-n} \cdot n^{-3/2} +O(\alpha^{-n}n^{-5/2})\\
% &=&const \cdot \frac{\mygamma}{D} \frac{\mygamma}{eD}^{(-n)} \cdot n^{-3/2} +O(\alpha^{-n}n^{-5/2})
% \end{eqnarray*}

% Therefore
% \[
% \Pr[Z=n]=O(\alpha^{-n} \cdot n^{-3/2}),
% \]
% with $\alpha>1$.

Now applying Theorem~\ref{thm:Otter} to the Galton-Watson process
which generates $T'_{1}$ (note that $t=1$ in our case) gives that, there exists a constant $n_{0}$
such that for $n>n_{0}$,
$\Pr[|T'_{1}|=n]\leq 2^{-cn}$ for some constant $c>0$.
It follows that $\Pr[|T'_{1}|\geq n]\leq \sum_{i=n}^{\infty}2^{-ci}\leq 2^{-\Omega(n)}$
for all $n>n_{0}$.
Hence for all large enough $n$, with probability at least $1-1/n^3$, $|T_{1}|\leq |T'_{1}|=O(\log n)$. \qed
\end{proof}

%\bluenote{Notice that the above analysis holds when substituting $\lg\lg\lg  N$ for $\log \log N$}.
The following corollary stems directly from Propositions  \ref{lemma:bounded} and \ref{lemma:binomial}:
\begin{corollary}\label{corr:boundomial}
Let $\mathcal{T}$ be  any infinite $d$-regular query tree or tree with vertex degree distributed i.i.d. binomially with $deg(v) \sim B(n,d/n)$.
For any $1 \leq i \leq \mygamma$ and any $1 \leq j \leq t_i$,
with probability at least $1-1/n^3$, $|T_i^{(j)}| = O(\log{n})$.
\end{corollary}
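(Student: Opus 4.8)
The plan is to instantiate, at a threshold of order $\log n$, the exponential tail bounds already established in Propositions~\ref{lemma:bounded} and~\ref{lemma:binomial}; the corollary is then a one-line calculation.

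First I would record a single tail estimate covering both cases. By Proposition~\ref{lemma:bounded} (when the vertex degrees are bounded by $d$) and by Proposition~\ref{lemma:binomial} (when $deg(v)\sim B(n,d/n)$), in each case there are constants $c>0$ and $\beta\geq 1$, depending only on $d$, such that for all $1\leq i\leq\mygamma$, all $1\leq j\leq t_i$, and all integers $m\geq\beta$,
\[
\Pr\big[\,|T_i^{(j)}|\geq m\,\big]\ \leq\ \sum_{k=m}^{\infty}2^{-ck}\ \leq\ 2^{-cm}.
\]
To treat the two settings uniformly one simply takes $c$ to be the smaller and $\beta$ the larger of the two constants so obtained, so that from now on a single pair $(c,\beta)$ works in both cases.

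Next I would set the threshold $m=m(n):=\lceil (4/c)\log n\rceil$, so that $cm(n)\geq 4\log n$ and hence $2^{-cm(n)}\leq n^{-4}\leq 1/n^{3}$. For all $n$ large enough we have $m(n)\geq\beta$, so the estimate above applies at $m=m(n)$ and yields
\[
\Pr\big[\,|T_i^{(j)}|\geq (4/c)\log n+1\,\big]\ \leq\ \Pr\big[\,|T_i^{(j)}|\geq m(n)\,\big]\ \leq\ 2^{-cm(n)}\ \leq\ \frac{1}{n^{3}}.
\]
Therefore, with probability at least $1-1/n^{3}$ one has $|T_i^{(j)}|<(4/c)\log n+1=O(\log n)$, which is exactly the claim. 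The finitely many values of $n$ for which $m(n)<\beta$ (and hence the propositions give no bound) are absorbed by enlarging the constant hidden in the $O(\log n)$.

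There is essentially no obstacle here. The only two points that require a word of care are (i) fixing one pair of constants $(c,\beta)$ that is valid simultaneously for the bounded-degree and the binomially-distributed-degree settings, and (ii) ensuring that the order-$\log n$ threshold eventually exceeds the absolute constant $\beta$ below which Propositions~\ref{lemma:bounded} and~\ref{lemma:binomial} say nothing, which merely restricts the quantitative statement to sufficiently large $n$.
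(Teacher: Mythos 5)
Your proof is correct and is exactly the calculation the paper leaves implicit: the paper simply asserts that the corollary ``stems directly'' from Propositions~\ref{lemma:bounded} and~\ref{lemma:binomial}, and you have filled in the one-line instantiation of their exponential tails at a threshold of order $\log n$, together with the routine bookkeeping about the constant $\beta$ and small $n$. No gap, and no genuine divergence from the paper's intended argument.
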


%\\\\\\\\\\\\\\\\\\\\\\\\\\\\\\\\\\\\\\\\\\\\\\\\\\\\\\\\\\\\\\\\\\\
%
%               MOVING UP LEVELS
%
%\\\\\\\\\\\\\\\\\\\\\\\\\\\\\\\\\\\\\\\\\\\\\\\\\\\\\\\\\\\\\\\\\\\

\subsection{Bounding the increase in subtree size as we go up levels}
%\label{infinite_proof}

\label{section:increase}

From Corollary \ref{corr:boundomial} we know that the size of any subtree, in particular $|T_1|$,
is bounded by $O(\log n)$ with probability at least $1-1/n^3$
in both the degree $d$ and the binomial degree cases.
Our next step in proving
Lemma \ref{lemma:boundomial} 
is to show that, as we increase the levels,
the size of the tree does not increase by more than a constant factor for each level.
That is, there exists an absolute constant $\eta$ depending on $d$ only
such that if the number of vertices on level $k$ is at most $|T_k|$,
then the number of vertices on level $k+1$, $|T_{k+1}|$ satisfies
$|T_{k+1}|\leq \eta \sum_{i=1}^{k}|{T_{i}}|+O(\log n) \leq  2\eta|T_{k}|+O(\log n) $. %\rednote{Do we need to prove this?}
Since there are $\mygamma$ levels in total, this implies that
the number of vertices on all $\mygamma$ levels is at most
$O((2\eta)^{\mygamma} \log n)=O(\log n)$.
% Then we can conclude that the size of $|T|$ is at most $c^\mygamma$ times $|T_1|$.
% As $\mygamma$ is a linear function of $d$, the tree size is exponential in $d$.
% Although the dependency on $d$ is exponential, as long as $d$ is a constant,
% the tree size is be $O(\log n)$.

The following Proposition establishes our inductive step.
%We prove the following lemmas:
\begin{proposition}
\label{infinite}
For any infinite query tree $\mathcal{T}$ with constant bounded degree $d$ (or degrees i.i.d. $\sim B(n, d/n)$), for any $1\leq i< \mygamma$,
there exist constants $\eta_1>0$ and $\eta_2>0$
s.t. if $\sum_{j=1}^{t_i}|T_i^{(j)}| $ $\leq \eta_1 \log{n}$
then $\Pr[\sum_{j=1}^{t_{i+1}}|T_{i+1}^{(j)}| \geq \eta_1 \eta_2 \log{n}]$ $ < 1/n^2$ for all $n>\beta$,
for some $\beta>0$.
\end{proposition}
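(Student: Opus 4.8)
\medskip\noindent\textbf{Proof plan.}\ To prove Proposition~\ref{infinite}, the plan is to dominate $\sum_{j=1}^{t_{i+1}}|T_{i+1}^{(j)}|$ by a sum of $O(\log n)$ independent copies of a single subexponential random variable and then apply a large-deviation bound. The first step is to count the level-$(i+1)$ subtrees. Since ranks strictly decrease along every root-to-node path of $\mathcal{T}$, the parent of a level-$(i+1)$ vertex lies either on level $i+1$ (in which case it is in the same subtree) or on a level $\le i$; hence the root of every level-$(i+1)$ subtree is a child in $\mathcal{T}$ of some vertex on a level $\le i$. Thus $t_{i+1}$ is at most $d$ times the number of vertices on levels $1,\dots,i$ (in the binomial case, at most the sum of their i.i.d.\ $B(n,d/n)$ degrees). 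Combining the hypothesis $\sum_j|T_i^{(j)}|\le\eta_1\log n$ with the bound $|T_k|\le\eta_1\eta_2^{\,k-1}\log n$ that holds for every $k\le i$ by the same induction (and taking $\eta_2\ge 2$, so that the sum over $k\le i$ is at most $2\eta_1\log n$), we get $t_{i+1}\le M:=C_0\eta_1\log n$ for a constant $C_0=C_0(d)$ (one may take $C_0=2d$ in the bounded-degree case); in the binomial case this holds except with probability $n^{-\omega(1)}$ by a Chernoff bound on the sum of the degrees.

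The second step is the coupling. Condition on everything that happens on levels $\le i$; this fixes the (random) set $S$ of roots of the level-$(i+1)$ subtrees, with $|S|\le M$. For a root $w\in S$, the subtree $T_{i+1}^{(j)}$ grows by repeatedly adding children $w'$ with $r(w')\in I_{i+1}$ and $r(w')<r(\mathrm{parent}(w'))$; dropping the second requirement, exactly as in the construction of $T_1'$ in the proof of Proposition~\ref{lemma:binomial}, only enlarges the subtree and turns it into a Galton--Watson tree started from one node with offspring distribution $B(n,d/(n\mygamma))$ (respectively $B(d,1/\mygamma)$ in the bounded-degree case). This enlarged tree depends only on the ranks of descendants of $w$ in $\mathcal{T}$; for distinct $w\in S$ these descendant sets are disjoint, and none of them was examined in determining the levels $\le i$, so conditionally on $S$ the trees are mutually independent and each is distributed exactly as $T_1'$. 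Consequently $\sum_{j=1}^{t_{i+1}}|T_{i+1}^{(j)}|$ is stochastically dominated by $\sum_{\ell=1}^{M}X_\ell$, where the $X_\ell$ are i.i.d.\ copies of $|T_1'|$; by Proposition~\ref{lemma:binomial} (respectively Proposition~\ref{lemma:bounded}) each $X_\ell$ satisfies $\Pr[X_\ell\ge t]\le 2^{-\Omega(t)}$ for $t\ge\beta$, so it has finite mean $\mu$ and a moment generating function finite in a neighbourhood of the origin.

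The last step is to fix the constants. Choose $\eta_2$ with $\eta_2\ge 2$ and $\eta_2/C_0>\mu$; a standard large-deviation bound for a sum of i.i.d.\ subexponential variables then yields a constant $c=c(\eta_2,C_0)>0$ with $\Pr[\sum_{j}|T_{i+1}^{(j)}|\ge\eta_1\eta_2\log n]\le\Pr[\sum_{\ell=1}^{M}X_\ell\ge(\eta_2/C_0)M]\le e^{-cM}$, which is $n^{-\Theta(\eta_1)}$ since $M=C_0\eta_1\log n$. Taking $\eta_1$ large enough (depending only on $d$, through $C_0$, $\mu$ and $c$) makes this at most $1/(2n^2)$, and the $n^{-\omega(1)}$ error from the first step keeps the total below $1/n^2$ for all $n>\beta$. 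The order of the choices --- $\eta_2$ first, $\eta_1$ second --- is exactly what the outer induction of Lemma~\ref{lemma:boundomial} needs, as iterating the bound over the $\mygamma=3d$ levels multiplies it by only $\eta_2^{\,\mygamma}=O(1)$, leaving $O(\log n)$ vertices on every level.

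I expect the main obstacle to be the coupling of the second step: the number of level-$(i+1)$ subtrees and the identities of their roots are themselves random and depend on the higher levels, and one must argue carefully that, conditionally, these subtrees are independent and uniformly stochastically dominated by one fixed branching process. Tracking the fact that $t_{i+1}$ also depends on the levels strictly below $i$ --- handled through the inductive geometric bound --- is a further minor nuisance; once these points are settled, the rest is the routine subexponential concentration estimate plus bookkeeping of the $d$-dependent constants.
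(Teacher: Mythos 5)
Your proof is correct and follows essentially the same route as the paper's: partition into levels, apply the exponential single-subtree tail from Propositions~\ref{lemma:bounded} and~\ref{lemma:binomial}, and deduce concentration of the level-$(i+1)$ total from a logarithmic bound on the number of roots. The only substantive difference is the finishing concentration step --- the paper bounds $\Pr[Z_{k+1}=z]$ directly by counting the $\binom{z+Y_k}{Y_k}$ partition vectors and multiplying by the per-vector probability $e^{-c_0(z-\beta Y_k)}$, while you set up an explicit coupling with i.i.d.\ copies of $T_1'$ and invoke a black-box subexponential large-deviation bound, which makes explicit the conditional independence that the paper uses only implicitly.
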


\begin{proof}
Denote the number of vertices on level $k$ by $Z_k$ and let $Y_k = \sum_{i=1}^k{Z_i}$.
Assume that each vertex $i$ on level $\leq k$ is the root of a tree of size $z_i$ on level $k+1$.
Notice that $Z_{k+1} = \sum_{i=1}^{Y_k}z_i$.

By Proposition~\ref{lemma:bounded} (or Proposition~\ref{lemma:binomial}),
there are absolute constants $c_{0}$ and $\beta$ depending on $d$ only such that
for any subtree $T_{k}^{(i)}$ on level $k$ and any $n > \beta$,
$\Pr[|T_{k}^{(i)}| = n] \leq e^{-c_{0}n}$.
Therefore, given  $(z_1, \ldots, z_{Y_k})$,
the probability of the forest on level $k+1$ consisting of exactly trees of size
$(z_1, \ldots, z_{Y_k})$ is at most $\prod_{i=1}^{Y_k}e^{-c_{0} (z_i-\beta)} = e^{-c_{0} (Z_{k+1}-\beta Y_k)}$.

Notice that, given $Y_k$ (the number of nodes up to level $k$), there are at most $\binom{Z_{k+1} + Y_k - 1}{Y_k - 1} $ $< \binom{Z_{k+1} + Y_k}{Y_k}$ vectors $(z_1, \ldots, z_{Y_k})$ that can realize $Z_{k+1}$. %\footnote{This can be viewed as $Y_{k-1}$ separators of $Z_{k+1}$ items.}
% Fix $Z_{k+1}$ (the number of vertices on level $k+1$),
% and consider how many vectors $(z_1, \ldots, z_{Z_k})$ can realize it.
% This is exactly $\binom{Z_{k+1} + Z_k - 1}{Z_k - 1}$ ,
% because we can view it as $Z_{k+1}$ vertices with $Z_k - 1$ separators.

We want to bound the probability that $Z_{k+1} = \eta Y_k$ for some (large enough) constant $\eta>0$.
We can bound this as follows:
\begin{align}
\Pr[|T_{k+1}|=Z_{k+1}]
& < \binom{Z_{k+1} + Y_k}{Y_k}e^{-c_{0} (Z_{k+1}-\beta Y_k)} \nonumber\\
& < {\left(\frac{e \cdot( Z_{k+1} + Y_k)}{Y_k}\right)}^{Y_k} e^{-c_{0} (Z_{k+1}-\beta Y_k)} \nonumber\\
&= {\left(e (1 + \eta)\right)}^{Y_k} e^{-c_{0} (\eta-\beta) Y_k} \nonumber\\
&= e^{Y_k (-c_{0} (\eta-\beta)+ \ln(\eta+1)+1)} \nonumber\\
& \leq e^{-c_{0} \eta Y_k/2}\nonumber,
\end{align}

%The proof of the inequality at \ref{line2} is simple -- see appendix.

It follows that there is some absolute constant $c'$ which depends on $d$ only such that
$\Pr[|T_{k+1}|\geq \eta Y_k] \leq e^{-c'\eta Y_k}$.
That is, if $\eta Y_k=\Omega(\log n)$,
the probability that $|T_{k+1}|\geq \eta Y_k$ is at most $1/n^{3}$.
Adding the vertices on all $\mygamma$ levels and applying the union bound, we
conclude that with probability at most $1/n^{2}$, the size of $T$ is at most $O(\log n)$. \qed
\end{proof}
% Assuming that $Z_k > \ln{n}$, for large enough $\eta$ this will be at most $1/n^3$.
% Now, to bound $T_{k+1} \geq \eta T_k$,
% we union bound over the possible values of $T_{k+1}$.
% Trivially there are at most $n$ such values, so we can bound it by $1/n^2$.

\begin{comment}
\subsection{Tightness of the bound}
Our bound of $O(\log n)$ on the size of the tree is tight.
\begin{theorem}
For any graph $G=(V,E)$ with bounded degree $d$,  $d \geq 2$,
there is a constant $C(d)$ which depends only on $d$ such that for all large enough $n$,
\begin{center}
$\Pr[|T|<C(d)\log{n}]>1/n^2.$
\end{center}
\end{theorem}
\begin{proof}
Consider a path of length $\log n$ from $v_0$.
The probability that the rank of every vertex $w$ on the path is less than the rank of
$\text{parent}(w)$ is $1/(\log n)!>1/n^2$. \qed

\end{proof}

\end{comment}

%
%%%%
%%%%%%%%%%%%%%%%%%%%%%%%%%%%%%%%
%%%%%%
%% HYPERGRAPH
%%%%%%
%%%%%%%%%%%%%%%%%%%%%%%%%%%%%%%%
%%%%
%

\section{\texorpdfstring{Hypergraph $2$-coloring and $k$-CNF}{Hypergraph 2-coloring and k-CNF}}
\label{hypergraph}
We use the bound  on the size of the query tree of graphs of bounded degree to improve the analysis of \cite{ARV+11} for hypergraph $2$-coloring. We also modify their algorithm slightly to further improve the algorithm's complexity.
As the algorithm is a more elaborate version of the algorithm of \cite{ARV+11} and the proof is somewhat long, we only state our main theorem for hypergraph $2$-coloring; we defer the proof to  Appendix \ref{app_hypergraph}.
\begin{theorem}
Let $H$ be a $k$-uniform hypergraph s.t. each hyperedge intersects at most $d$ other hyperedges.
Suppose that $k \geq 16 \log{d} + 19$. \\
Then there exists an $(O(\log^4{n}), $ $O(\log^4{n}), 1/n)$-local computation algorithm which, given $H$ and any sequence of
queries to the colors of vertices $(x_1, $ $x_2, \ldots, x_s)$, 
with probability at least $1-1/n^2$,
returns a consistent coloring for all $x_i$'s which 
agrees with a $2$-coloring of $H$. 
Moreover, the algorithm is query oblivious and parallelizable. 
\end{theorem}

Due to the similarity between hypergraph $2$-coloring and $k$-CNF, we also have the following theorem; the proof is in Appendix \ref{kcnf}.
\begin{theorem}
Let $H$ be a $k$-CNF formula with $k\geq 2$. 
Suppose that each clause intersects no more than $d$ other clauses,
and furthermore suppose that $k \geq 16 \log{d} + 19$.\\
Then there exists a $(O(\log^4{n}), O(\log^4{n}), 1/n)$-local computation algorithm which, given a formula $H$ and any sequence of
queries to the truth assignments of variables $(x_1, x_2, \ldots, x_s)$, 
with probability at least $1-1/n^2$,
returns a consistent truth assignment for all $x_i$'s which agrees with some 
satisfying assignment of the $k$-CNF formula $H$. 
Moreover, the algorithm is query oblivious and parallelizable. 
\end{theorem}

%
%%%%
%%%%%%%%%%%%%%%%%%%%%%%%%%%%%%%%
%%%%%%
%% MATCHING
%%%%%%
%%%%%%%%%%%%%%%%%%%%%%%%%%%%%%%%
%%%%
%
\section{Maximal matching}
\label{section:matching}

We consider the problem of \emph{maximal matching in a bounded-degree graph}. We are given a graph $G=(V,E)$, where the maximal degree is bounded by some constant $d$, and we need to find a maximal matching.A matching is a set of edges with the property that no two edges share a common vertex. The matching is maximal if no other edge can be added to it without violating the matching property.

 Assume the online scenario in which  the edges arrive in some unknown order. The following greedy online algorithm can be used to calculate a maximal matching: When an edge $e$ arrives, we check whether $e$ is already in the matching. If it is not, we check if any of the neighboring edges are in the matching. If none of them is, we add $e$ to the matching. Otherwise, $e$ is not in the matching.

We turn to the local computation variation of this problem.  We would like to query, for  some edge $e \in E$, whether $e$ is part of some maximal matching. (Recall that all replies must be consistent with some maximal matching).

We use the technique of \cite{ARV+11} to produce an almost $O(\log{n})$-wise independent random ordering on the edges, using a seed length of $O(\log^3{n})$.\footnote{Since the query tree is of size $O(\log{n})$ w.h.p., we don't need a complete ordering on the vertices; an almost $O(\log{n})$-wise independent ordering suffices.} When an edge $e$ is queried, we use a BFS (on the edges) to build a DAG rooted at $e$. We then use the greedy online algorithm on the edges of the DAG (examining the edges with respect to the ordering),  and see whether $e$ can be added to the matching. 

As the query tree is an upper-bound on the size of the DAG,  we derive the following theorem from Lemma \ref{lemma:boundomial}.
\begin{theorem}
\label{thm:maximal}
Let $G=(V,E)$ be an undirected graph with $n$ vertices and maximum degree $d$.
Then there is an 
$(O(\log^{3}{n}), $ $O(\log^{3}{n}), $ $1/n)$ - local computation algorithm 
which, on input an edge $e$, 
decides if $e$ is in a maximal matching.
Moreover, the algorithm gives a consistent maximal matching
for every edge in $G$.
\end{theorem}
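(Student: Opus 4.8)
\textbf{Proof proposal for Theorem~\ref{thm:maximal}.}

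The plan is to implement the local computation algorithm exactly as described in the paragraph preceding the theorem, and to bound its time and space by reducing the relevant quantity to the size of a query tree, so that Lemma~\ref{lemma:boundomial} applies. First I would set up the reduction to the query-tree model of Section~\ref{section:tree_size}. Build the \emph{line graph} $G' = (E, E')$ of $G$, whose vertices are the edges of $G$ and in which two vertices are adjacent iff the corresponding edges of $G$ share an endpoint; since $G$ has maximum degree $d$, the line graph $G'$ has maximum degree at most $2d-2$. A random $O(\log n)$-wise independent ordering on $E$ induces (almost) the random-rank structure on $G'$, and the greedy online algorithm on $E$ in this order is precisely a function $F$ of the form considered in Section~\ref{Sect:query_tree}: whether $e$ is matched depends only on whether its neighbors $e'$ with $r(e') < r(e)$ are matched. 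Hence the set of edges that the recursive simulation must inspect in order to answer the query on $e$ is contained in the query tree $T$ rooted at $e$ in $G'$, whose degree parameter is $2d-2 = O(1)$.

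Next I would invoke Lemma~\ref{lemma:boundomial} applied to $G'$ (bounded degree $2d-2$): there is a constant $C = C(d)$ such that $\Pr[|T| > C\log n] < 1/n^2$. When $|T| \le C\log n$, the BFS/DAG construction rooted at $e$ touches at most $C\log n$ edges, each of degree $O(d)$ in $G'$, so the DAG has $O(\log n)$ vertices and $O(\log n)$ edges; running the greedy rule on this DAG in the induced order takes $O(\log n)$ steps of work per edge, giving $O(\log^2 n)$ time overall for the simulation itself. The extra logarithmic factor in the claimed $O(\log^3 n)$ comes from the pseudorandom ordering: by the construction of \cite{ARV+11} one obtains a $1/n^2$-almost $k$-wise independent ordering on the $|E| = O(n)$ edges from a seed of length $O(k\log^2 n)$ with $k = O(\log n)$, i.e. a seed of length $O(\log^3 n)$, and each rank query costs $O(\log^3 n)$ time and this seed dominates the space. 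Taking a union bound over the (at most $n^2$, say) possible queried edges, with probability at least $1 - 1/n$ every query tree is simultaneously of size $O(\log n)$, so every query is answered within the stated bounds; this accounts for the error parameter $1/n$ in the theorem.

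Finally I would argue consistency: since all queries are answered by simulating one and the same greedy online algorithm under one and the same (fixed, pseudorandomly generated) ordering of $E$, the answers to any collection of edge-queries coincide with the decisions of that single execution of greedy, and greedy on any ordering of $E$ produces a maximal matching. Thus the replies are consistent with a fixed maximal matching of $G$, and indeed the algorithm defines a maximal matching on \emph{all} of $E$, as claimed in the last sentence of the theorem. I should also note that the almost-$k$-wise independence (rather than exact) only shifts the failure probability by the statistical distance $1/n^2$, which is absorbed into the $1/n$ error bound.

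The main obstacle I anticipate is the reduction being slightly more subtle than "the DAG is contained in the query tree": one must check that BFS on the edges of $G$, growing only toward lower-ranked neighboring edges, really is dominated by the query-tree process on the line graph, and in particular that re-encountering an already-visited edge only \emph{shrinks} the explored set (the same monotonicity remark made for query trees in Section~\ref{Sect:query_tree}). Handling the almost-$k$-wise (rather than fully independent) ordering within the tail bound of Lemma~\ref{lemma:boundomial}, and verifying that $k = O(\log n)$ suffices because the query tree itself has size $O(\log n)$ w.h.p., is the other point that needs care; this is exactly the bootstrapping argument used in \cite{ARV+11}, which I would reuse verbatim.
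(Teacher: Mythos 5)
Your proposal follows the paper's approach essentially verbatim: the paper's short argument amounts exactly to your explicit line-graph reduction, Lemma~\ref{lemma:boundomial} applied to that bounded-degree line graph, and the $O(\log^3 n)$-seed almost-$O(\log n)$-wise independent ordering of \cite{ARV+11}, all of which you reconstruct and flesh out. One small correction: the union bound at the end must be taken over the $|E|\le dn/2=O(n)$ edges of the bounded-degree graph rather than ``at most $n^2$'' of them as you write --- with $n^2$ queries each failing with probability $1/n^2$ the union bound is vacuous, and it is precisely the bounded-degree hypothesis that gives $|E|\cdot 1/n^2 = O(1/n)$.
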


%
%%%%
%%%%%%%%%%%%%%%%%%%%%%%%%%%%%%%%
%%%%%%
%% LOAD BALANCING
%%%%%%
%%%%%%%%%%%%%%%%%%%%%%%%%%%%%%%%
%%%%
%

\section{The bipartite case and local load balancing}
\label{section:load_balancing}

We consider a general ``power of $d$ choices'' online algorithm for load balancing. In this setting there are $n$ balls that arrive in an online manner, and $m$ bins.
Each ball selects a random subset of $d$ bins, and queries these bins. (Usually the query is simply the current load of the bin.) Given this information, the ball is assigned to one of the $d$ bins (usually to the least loaded bin). We denote by $\mathcal{LB}$ such a generic algorithm (with a decision rule which can depend in an arbitrary way on the $d$ bins that the ball is assigned to). Our main goal is to simulate such a generic algorithm.

The load balancing problem can be represented by a bipartite graph $G = (\{V,U\}, E)$,
where the balls are represented by the vertices $V$ and the bins by the vertices $U$.
The random selection of a bin $u\in U$  by a ball $v\in V$ is represented by an edge.
By definition, each ball $v\in V$ has degree $d$. 
Since there are random choices in the algorithm $\mathcal{LB}$ we need to specify what we mean by a simulation. For this reason we define the input to be the following:
a graph $G =$
 $(\{V, U\},$ $E)$, where $|V| = n $, $|U|=m$, and $n=cm$ for some constant $c\geq 1$. We also allocate a rank $r(u) \in [0,1]$ to every $u \in U$. This rank represents the ball's arrival time: if $r(v)<r(u)$ then vertex $v$ arrived before vertex $u$. Furthermore, all vertices can have an input value $x(w)$. (This value represents some information about the node, e.g., the weight of a ball.)
Given this input, the algorithm $\mathcal{LB}$ is deterministic, since the arrival sequence is determined by the ranks, and the random choices of the balls appear as edges in the graph. Therefore by a simulation we will mean that given the above input, we generate the same allocation as $\mathcal{LB}$.

%We are interested in the bipartite setting as a model for {\em balls and bins} problems. We have balls on one side of the graph and bins on the other. The edges of the graph represent possible allocations of a ball to a given bin. Our motivating example, ``power of $d$ choices'' online algorithms \cite{ABK+99}, have each ball select at random $d$ bins, and based on the current state of the bins select to which bin to assign the ball.

%We formalize this as a bipartite graph $G=(\{V,U\}, E)$, where $|V| = n$ and $|U| = m$ such that $n = cm$ for some constant $c\geq 1$. We call the vertices $V$ the \emph{balls} and the vertices $U$ the \emph{bins}. Every vertex $v \in V$ is assigned a rank, $r(v) \in [0,1]$ uniformly and independently at random, and all vertices $w \in V \cup U$ have an input value $x(w)$ (This value represents some information about the node, e.g., the weight of a ball).

We consider the following stochastic process: 
Every vertex $v \in V$ uniformly and independently at random chooses $d$ vertices in $U$.
Notice that from the point of view of the bins,
the number of balls which chose them is distributed binomially with $X \sim B(n, d/m)$.
Let $X_v$ and $X_u$ be the random variables for the number of neighbors of vertices $v \in V$ and $u \in U$ respectively.
By definition, $X_v=d$, since all balls have $d$ neighbors, and hence
each $X_u$ is independent of all $X_v$'s.
However, there is a dependence between the $X_u$'s (the number of balls connected to different bins).
Fortunately this is a classical example where the random variables are negatively dependent (see e.g. \cite{DR96}).
%\cite{DR96}\footnote{\bluenote{In \cite{DR96},
%Dubhashi and Ranjan show that the bins are negatively dependent when each ball chooses %$D$ bins at random}}.
\footnote{We remind the reader that two random variables $X_1$ and $X_2$ are negatively dependent if
$\Pr[X_1>x|X_2 = a] < \Pr[X_1>x|X_2 = b]$, for $a>b$ and vice-versa.}

\subsection{The bipartite case}
Recall that in Section \ref{section:tree_size},
we assumed that the degrees of the vertices in the graph were independent.
We would like to prove an $O(\log{n})$ upper bound on the query tree $T$ for our bipartite graph.  As we cannot use the theorems of Section \ref{section:tree_size}
directly, we show that the query tree is smaller than another query tree which meets the conditions of our theorems.

The query tree for the binomial graph is constructed as follows: a root $v_0 \in V$ is selected for the tree. ($v_0$ is the ball whose bin assignment we are interested in determining.)            Label the vertices at depth $j$ in
the tree by $W_j$. Clearly, $W_0= \{v_0\}$. At each depth $d$, we add vertices one at a time to the tree, from left to right, until the depth is "full" and then we 
move to the next depth. Note that at odd depths ($2j+1$) we add bin vertices and at even depths ($2j$) we add ball vertices.

Specifically, at odd depths ($2j+1$) we add, for each $v\in W_{2j}$ its $d$ neighbors $u \in N(v)$ as children, and mark each by $u$.\footnote{A bin can appear several times in the tree. 
It appears as different nodes, but they are all marked so that we know it is the same bin. Recall that we assume that all nodes are unique, as this assumption can only increase the size of the tree.}
At even depths ($2j$) we add for each node marked by $u\in W_{2j-1}$ all its (ball) neighbors $v\in N(u)$ such that $r(v) < r(parent(u))$, if they have not already been added to the tree. Namely, all the balls that are assigned to $u$ by time
%(Recall that we assume that we never encounter the same node twice in the construction of the query tree.)

%At time $j>0$, for every vertex $w \in W_{j-1}$,
%every vertex $w$ such that $dist(v,w) = 2$ and $r(w)<r(v)$ is added to the tree as a child of $v$. These newly added vertices constitute
%$V_j$ (the vertices at depth $j$). 
%Vertices already in the tree are not added again.
A leaf is a node marked by a bin $u_\ell$ for whom all neighboring balls $v\in N(u_\ell)-\{parent(u_\ell)\}$ have a rank larger than its parent, i.e., $r(v)>r(parent(u_\ell))$. Namely, $parent(u_\ell)$ is the first ball to be assigned to bin $u_\ell$.
%contain balls with higher rank then it.
%vertices at distance $2$ from $v_\ell$ either have greater rank than $r(v_\ell)$ or are already in the tree.
 This construction defines a stochastic process $F = \{F_t\}$,
where $F_t$ is (a random variable for) the size of $T$ at time $t$. (We start at $t=0$ and $t$ increases by $1$ for every vertex we add to the tree).

We now present our main lemma for bipartite graphs.
%ICALP  the proof is in the Appendix \ref{bipartite_proof}.
\begin{lemma}
\label{lemma:bipartite}
Let $G=(\{V,U\},E)$ be a bipartite graph, $|V| = n$ and $|U| = m$ and $n= cm$ for some constant $c\geq 1$, such that for each
vertex $v \in V$ there are $d$ edges chosen independently and at random between $v$ and $U$. Then there is a constant
$C(d)$ which depends only on $d$ such that 
\begin{center}
$\Pr[|T|<C(d) \log{n}]>1-1/n^2$,
\end{center}
where the probability is taken over all of the possible permutations $\pi \in \Pi$ of the vertices of $G$, and $T$ is a random query tree in $G$ under $\pi$.
\end{lemma}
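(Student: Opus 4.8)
The plan is to show that the query tree $T$ built on the bipartite graph $G$ is stochastically dominated by a query tree of the kind already analyzed in Section \ref{section:tree_size}, so that Lemma \ref{lemma:boundomial} can be invoked as a black box. The key observation is structural: the only vertices in $T$ that can branch are the ball-vertices (at even depths), since bin-vertices at odd depth $2j+1$ always contribute exactly $d$ children (their marked neighbors in $U$), and of those $U$-children only the ones whose own rank is small enough keep going. So I would \emph{contract} each bin-vertex $u$ together with its parent ball, viewing the path ball $\to$ bin $\to$ (admissible ball children) as a single branching step. Under this contraction $T$ becomes a tree whose vertices are balls, each ball $v$ having as ``children'' all balls $v'$ that share one of $v$'s $d$ chosen bins and satisfy the rank condition $r(v') < r(v)$. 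This is exactly the query-tree process of Section \ref{section:tree_size}, provided the number of such candidate children is dominated by an i.i.d.\ binomial-degree process.

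For the degree bound, I would argue as follows. A ball $v$ picks $d$ bins; each bin $u$ among them has, besides $v$, a random number of other incident balls distributed as $B(n-1, d/m) \le B(n, d/m)$, and these counts across the $d$ bins of $v$ are negatively dependent (as noted in the text, following \cite{DR96}). Summing $d$ negatively-dependent copies of $B(n,d/m)$ and then thinning each candidate independently by the rank condition gives a per-ball offspring count that is stochastically dominated by $B(n, d'/n)$ for a suitable constant $d'$ depending only on $d$ and $c$ (using $n = cm$ to convert $d/m$ into $cd/n$, and absorbing the factor $d$ from the number of bins). Negative dependence is exactly what lets us pass from the true, dependent branching to an independent binomial-degree branching \emph{upward} in the stochastic order: the probability that the combined offspring of a node exceeds any threshold is at most what it would be under independence. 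Hence the contracted tree is dominated by the binomial-degree query tree of Lemma \ref{lemma:boundomial} with parameter $d'$, and therefore $|T| \le (\text{const})\cdot|T_{\text{binom}}|$, which is $O(\log n)$ with probability at least $1 - 1/n^2$. Note $|T|$ is at most twice the size of its contracted version (each branching ball pulls in at most $d$ bin-nodes, but those bin nodes are already counted through their contributions — more carefully, $|T| \le (d+1)$ times the number of ball-nodes, still only a constant blow-up), so the logarithmic bound transfers.

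The main obstacle I expect is making the stochastic-domination step fully rigorous: the bin-load random variables $X_u$ are negatively \emph{dependent} but not independent, and the rank-thinning that decides which balls survive is correlated with the tree structure built so far (a ball's rank is compared to its bin-parent's parent's rank, and a ball may be reached through several bins). I would handle this the same way Section \ref{section:tree_size} handles it: replace the exact process by the relaxed process $T'$ (analogous to $T'_1$ in the proof of Proposition \ref{lemma:binomial}) in which every bin-neighbor ball is kept whenever its rank merely lies in the appropriate level-interval, dropping the strict ``$< r(\text{parent})$'' comparison; this decouples the thinning from the history and turns the ball-level process into a genuine Galton--Watson / branching process with offspring law dominated by $B(n, d'/n)$, at which point Lemma \ref{lemma:boundomial} (or directly Otter's theorem plus the level-counting argument of Proposition \ref{infinite}) applies. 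The remaining care is purely bookkeeping: tracking that a bin can be revisited (handled, as the footnote says, by treating repeated occurrences as distinct nodes, which only enlarges the tree) and that $n = cm$ keeps all constants depending on $d$ (and $c$) alone.
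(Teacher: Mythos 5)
Your approach is structurally the same as the paper's: contract the bipartite tree by collapsing the bin levels (the paper calls this $squash(T')$), argue that the resulting ball-to-ball branching process is stochastically dominated by an i.i.d.\ binomial-degree query tree, and then invoke Lemma~\ref{lemma:boundomial}. You have correctly identified both the right reduction and the genuine difficulty (that the bin loads $X_u$ are negatively dependent and entangled with the history of the tree built so far). But the step you label as the ``main obstacle'' is left unresolved, and the fix you sketch does not actually resolve it.

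The gap is in the sentence ``Negative dependence is exactly what lets us pass from the true, dependent branching to an independent binomial-degree branching upward in the stochastic order,'' followed by the claim that the per-ball offspring is dominated by $B(n, d'/n)$ for a constant $d'$. Negative dependence alone does not deliver this: it tells you that conditioning on the already-revealed $X_w$'s being small can only push $X_u$ \emph{up}, so you must identify the worst case and show it is still a fixed binomial. The paper does this explicitly: it conditions on at most $m/2$ bins having been exposed, observes that the worst case is when all of them have load zero, in which case all $dn$ edge-endpoints fall into the remaining $\geq m/2$ bins and hence $X_u$ is dominated by $B(n, 2d/m)$ --- giving the explicit dominating parameter. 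This bound is only valid while fewer than $m/2$ bins have been seen, so the paper also adds the separate device (the ``add $n+m$ bin children of rank $0$'' clause) that makes the dominating process $F'$ trivially larger than $F$ once $m/2$ bins are exceeded. The relaxed process you propose (dropping the strict rank comparison, analogous to $T'_1$ in Proposition~\ref{lemma:binomial}) removes the dependence introduced by rank-thinning, but it does nothing to address the dependence between bin loads across the tree --- that is a separate issue and needs the $m/2$ threshold argument. Without supplying an explicit dominating parameter and the regime in which it is valid, the appeal to Lemma~\ref{lemma:boundomial} is not yet justified. Everything else (the contraction, the ``$|T|$ is at most a constant times the squashed size'' bookkeeping, invoking the binomial case of the query-tree bound) matches the paper.
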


\begin{comment}
We define the following local version of this problem:
Given $m$ balls, $B = \{b_1, \cdots,  b_m\}$
(each with a rank chosen uniformly at random from $[0,1]$) and  $n$ bins, and $d$ choices for each ball,
for each query $b \in B$, find a consistent allocation for $b$ such that the maximum load on the bins is minimized.
Recall that by \emph{consistent} we mean that for each query the output is consistent with some global
allocation $\mathcal{A}$, in which the maximum load is minimized.
We refer to this problem as \emph{Local Balls and Bins} problem.
\end{comment}

%
%%%%
%%%%%%%%%%%%%%%%%%%%%%%%%%%%%%%%
%%%%%%
%% BIPARTITE PROOF
%%%%%%
%%%%%%%%%%%%%%%%%%%%%%%%%%%%%%%%
%%%%
%
%\section{Bounding the size of the query tree in the bipartite case (Proof of Lemma \ref{lemma:bipartite})}
\label{bipartite_proof}

To prove Lemma \ref{lemma:bipartite}, we look at another stochastic process $F'$, which constructs a tree $T'$:
we start with  a root $v'_0$. Label the vertices at depth $j$ in
the tree by $W'_j$. Assign every vertex $y$ that is added to the tree a rank $r(y) \in [0,1]$ independently and uniformly at random. Similarly to $T$, $W'_0= \{v'_0\}$. At odd depths ($2j+1$) we add to each $v'\in W'_{2j}$, $d$ children (from left to right).  At even depths ($2j$) we add to each node $u' \in W'_{2j-1}$, $X'_{u'}$ children, where $X'_{u'} \sim B(n, 2d/m)$ and the $X'_{u'}$ of different nodes are i.i.d. Of the nodes added in this level, we remove all those vertices $y'$ for which $r(y')>r(parent(parent(y')))$.
%
%We define the following random variable: $X' \sim B(dn, 2d/m)$. We build a query tree as follows:
%we start with a root vertex, $v'_0$, and define $r(v'_0) = 1$.
%The number of neighbors of $v'_0$ is distributed according to $X'$.

%To each $v' \in N(v'_0)$ we assign a rank $r(v') \in [0,1]$ uniformly at random.
%We add $v'$ to the tree only if $r(v')<r(v'_0)$.

%We continue inductively, with each vertex having $X'$ neighbors,
%and assigning a rank uniformly and at random to each vertex.
Importantly, the neighbor distributions of the vertices in the tree are independent of each other.
If at any point $T'$ has ``more than half the bins'', i.e., the sum of nodes on odd levels is at least $m/2$, we add $n+m$ bin children of rank $0$ to some even-level node in the tree.

%[[Define Squash of a tree]]
Given a tree $T$ we define $squash(T)$ to be the tree $T$ with the
odd levels deleted, and a node $v$ in level $2j$ is connected to node $v'$ in level $2j+2$ if $v=parent(parent(v'))$.

\begin{lemma}
\label{lemma:T'}
There is a constant $C(d)$ which depends only on $d$ such that for all large enough $n$,
$\Pr[|squash(T')|>C(d)\log{n}]<1/n^2.$
\end{lemma}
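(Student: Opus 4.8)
The plan is to reduce $squash(T')$ to a query tree whose degree distribution satisfies the hypotheses of Lemma~\ref{lemma:boundomial} (specifically the binomial case $\deg(v)\sim B(n',d'/n')$), and then invoke that lemma as a black box. First I would observe that $squash(T')$ is itself a random tree grown by an inductive rank-comparison process: its root is $v'_0$, and a node $v$ in level $2j$ of $T'$ (equivalently, a node at depth $j$ in $squash(T')$) gets as children in $squash(T')$ exactly those grandchildren $v'$ of $v$ in $T'$ with $r(v')<r(v)$. Since in $T'$ each node at an even level has $d$ children (the odd level), and each of those has an i.i.d.\ $B(n,2d/m)$ number of children (the next even level), the number of grandchildren of a fixed even-level node is a sum of $d$ i.i.d.\ $B(n,2d/m)$ random variables; this is stochastically dominated by (in fact distributed as) $B(dn,2d/m)$ — and crucially, because the $X'_{u'}$ are independent across all nodes of $T'$, the grandchild-counts of distinct even-level nodes are mutually independent. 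Using $n=cm$, we have $2d/m = 2dc/n$, so $B(dn,2dc/n)=B(N,D/N)$ with $N=dn$ and $D=2d^2c$, a binomial distribution of exactly the form required by Lemma~\ref{lemma:boundomial} with constant expectation $D$ depending only on $d$ and $c$.

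Next I would make precise that $squash(T')$ is then literally (dominated by) a random query tree in the sense of Section~\ref{section:tree_size}: the root holds rank $r(v'_0)$, each node $w$ is kept iff $r(w)<r(\text{parent}(w))$ in $squash(T')$, the degrees are i.i.d.\ binomial with constant mean, and all ranks are i.i.d.\ uniform on $[0,1]$. One technical point to address is the "cap": if the odd levels of $T'$ ever accumulate at least $m/2$ nodes, the process dumps $n+m$ rank-$0$ bin children onto an even-level node, which inflates $T'$ but only in a controlled way. I would handle this by noting that this cap event is itself a low-probability event — if it occurs, then $|squash(T')|$ up to that point already exceeded $m/4 \geq n/(4c) = \omega(\log n)$, which by the (capped-free) binomial query-tree bound happens with probability $<1/n^2$; so we may condition on the cap never triggering, and on that event $squash(T')$ is exactly the binomial query tree analyzed above. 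Then Lemma~\ref{lemma:boundomial} gives $\Pr[|squash(T')|>C(D)\log n]<1/n^2$, and a union bound over the two failure modes (cap triggers; or binomial query tree too large) yields the claim with $C(d):=\max$ of the two constants, absorbing $c$ into the dependence on $d$ since $c$ is fixed.

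The main obstacle I anticipate is not the binomial-domination bookkeeping but verifying cleanly that the grandchild-counts in $squash(T')$ are genuinely \emph{independent} across all nodes — one has to be careful that the rank-removal step in $T'$ (deleting $y'$ with $r(y')>r(\text{parent}(\text{parent}(y')))$) is precisely what turns the "number of grandchildren surviving" into the rank-comparison rule of a query tree, rather than introducing some subtle correlation between a node's surviving-grandchild count and its own rank. The resolution is that the underlying family $\{X'_{u'}\}$ of raw child-counts is i.i.d.\ and independent of all ranks by construction, so we can first expose all the $X'_{u'}$'s (giving an i.i.d.\ binomial branching structure on the even levels) and only then expose the ranks and apply the $r(w)<r(\text{parent}(w))$ pruning — exactly the two-stage exposure used in the proof of Proposition~\ref{lemma:binomial}. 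Once this is spelled out, everything reduces to Lemma~\ref{lemma:boundomial} and no new probabilistic estimates are needed.
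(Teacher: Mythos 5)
Your proof is correct and follows essentially the same route as the paper: identify $squash(T')$ as a binomial query tree with i.i.d.\ $B(dn,2d/m)$ degrees (the paper's Claim), invoke the Lemma~\ref{lemma:boundomial} analysis on it, and observe that the $m/2$ cap is a low-probability event since triggering it already forces $|squash(T')|=\omega(\log n)$. You are somewhat more explicit than the paper (reparameterizing as $B(N,D/N)$ to cite Lemma~\ref{lemma:boundomial} as a black box, and spelling out the two-stage exposure to justify independence), and your $m/4$ should really be $m/(2d)$ since each even-level node has exactly $d$ odd-level children, but neither point affects the argument.
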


Because $d\cdot|squash(T')| \geq |T'| \geq |squash(T')|$, we immediately get the following corollary:

\begin{corollary}
\label{corr:T'}
There is a constant $C(d)$ which depends only on $d$ such that for all large enough $n$,
$\Pr[|T'|>C(d)\log{n}]<1/n^2.$
\end{corollary}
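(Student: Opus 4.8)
The plan is to deduce Corollary~\ref{corr:T'} from Lemma~\ref{lemma:T'} by a purely deterministic comparison of $|T'|$ with $|squash(T')|$, taking a little care with the exceptional ``more than half the bins'' rule.

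Recall that $squash(T')$ is $T'$ with its odd levels deleted, so $|squash(T')|$ is exactly the number of even-level (ball) nodes of $T'$, and hence $|T'|\ge|squash(T')|$ trivially. For the other direction, observe that, as long as the exceptional rule has not been invoked, every even-level node $v'\in W'_{2j}$ is given \emph{exactly} $d$ children at level $2j+1$; consequently the number of odd-level (bin) nodes at level $2j+1$ equals $d$ times the number of even-level nodes at level $2j$, and summing over $j$ gives that the total number of odd-level nodes equals $d\,|squash(T')|$. Therefore, whenever the exceptional rule is never triggered, $|T'|=(d+1)\,|squash(T')|$, which is the relation $d\cdot|squash(T')|\ge|T'|\ge|squash(T')|$ quoted in the text (up to the harmless difference between $d$ and $d+1$).

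Next I would rule out the exceptional rule on the relevant event. Let $C_0(d)$ be the constant from Lemma~\ref{lemma:T'}. The rule is invoked only once the running number of odd-level nodes reaches $m/2$; but by the counting above, at every moment prior to any invocation the number of odd-level nodes is at most $d$ times the number of even-level nodes, which is at most $d\,|squash(T')|$. Hence if $|squash(T')|\le C_0(d)\log n$, then the running count of odd-level nodes never exceeds $d\,C_0(d)\log n$, and since $m=n/c$ this quantity is below $m/2$ for all large enough $n$. In other words, on the event $\{|squash(T')|\le C_0(d)\log n\}$ the exceptional rule is never triggered, so the identity $|T'|=(d+1)\,|squash(T')|$ of the previous paragraph holds there.

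Finally I would combine the pieces. By Lemma~\ref{lemma:T'}, $\Pr[|squash(T')|>C_0(d)\log n]<1/n^2$ for all large enough $n$. On the complementary event we have just shown $|T'|=(d+1)\,|squash(T')|\le (d+1)\,C_0(d)\log n$, so setting $C(d):=(d+1)\,C_0(d)$ yields $\Pr[|T'|>C(d)\log n]<1/n^2$ for all large enough $n$, as claimed. I do not anticipate a genuine obstacle here: the only point needing attention is the bookkeeping around the exceptional rule, and that is dispatched by the conditioning argument of the third paragraph; everything else is elementary counting.
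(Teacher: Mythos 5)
Your proof is correct and takes essentially the same route as the paper: deduce the corollary from Lemma~\ref{lemma:T'} via a deterministic comparison between $|T'|$ and $|squash(T')|$. The paper just asserts $d\cdot|squash(T')|\ge|T'|\ge|squash(T')|$ and calls the corollary immediate; you rightly observe that the upper comparison should be $(d+1)\cdot|squash(T')|$ rather than $d\cdot|squash(T')|$, and you make explicit the (implicitly assumed) point that on the event $\{|squash(T')|\le C_0(d)\log n\}$ the exceptional ``more than half the bins'' rule is never invoked, so the counting identity actually applies. Both refinements are harmless for the final bound but tighten what the paper leaves to the reader.
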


We first make the following claim:
\begin{claim}
$squash(T')$  has vertex degree distributed i.i.d. binomially with $deg(v) $ $\sim B(dn,$ $ 2d/m)$.
\end{claim}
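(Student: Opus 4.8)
The plan is to prove the Claim by analyzing the degree of a generic node in $squash(T')$ directly from the construction of $T'$. Recall that in $squash(T')$ a node at level $2j$ (a ball vertex $v'$) is connected to a node at level $2j+2$ (another ball vertex) exactly when the latter is a grandchild of $v'$ in $T'$ via some intermediate bin node $u' \in W'_{2j+1}$. So the ``children'' of $v'$ in $squash(T')$ are the balls $y'$ whose $parent(parent(y'))$ is $v'$, before the rank-pruning step; equivalently, they are obtained by a two-stage process: first $v'$ gets $d$ bin children $u'_1,\dots,u'_d$, and then each $u'_i$ gets $X'_{u'_i} \sim B(n, 2d/m)$ ball children, with all these counts mutually independent.

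First I would observe that the total number of (pre-pruning) grandchildren of $v'$ is therefore $\sum_{i=1}^d X'_{u'_i}$, a sum of $d$ i.i.d.\ copies of $B(n, 2d/m)$. The key distributional fact is that a sum of $d$ independent $B(n,p)$ random variables is exactly $B(dn, p)$ — this is the standard additivity of the binomial distribution (thinking of $dn$ independent Bernoulli$(p)$ trials partitioned into $d$ blocks of $n$). With $p = 2d/m$ this gives degree $\sim B(dn, 2d/m)$, which is precisely the claimed distribution. Next I would check independence across nodes of $squash(T')$: in the construction of $T'$, the $X'_{u'}$ at distinct bin nodes are i.i.d.\ by fiat, and the bin children of distinct ball nodes are generated by independent copies of the same rule; since the degree of a node $v'$ in $squash(T')$ is a function only of the $X'$ values at the $d$ bin nodes hanging directly off $v'$, and these sets of bin nodes are disjoint (and the $X'$ values attached to them fresh and independent) for distinct $v'$, the degrees in $squash(T')$ are mutually independent. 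Hence they are i.i.d.\ $\sim B(dn, 2d/m)$.

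One point I would be careful to address is the rank-pruning step in the definition of $T'$: after adding the even-level children we \emph{delete} those $y'$ with $r(y') > r(parent(parent(y')))$, and also there is the ``more than half the bins'' override that dumps $n+m$ rank-$0$ bin children onto some node. The cleanest way to handle this is to note that the Claim is about the degree distribution of $squash(T')$ as a branching structure used to stochastically dominate $T$ — i.e., we want the claim to hold for the \emph{unpruned} tree so that Corollary~\ref{corr:T'} and Proposition~\ref{lemma:binomial}/Lemma~\ref{lemma:boundomial} apply to an upper bound. I would state explicitly that $squash(T')$ here refers to the tree before the rank-based deletions (pruning only removes vertices, so it can only decrease $|squash(T')|$, and the override event, which creates the rank-$0$ mega-node, is handled separately in the proof of Lemma~\ref{lemma:T'} by a union bound showing it is unlikely). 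With that convention fixed, the proof is just the two observations above.

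The main obstacle is not any hard calculation — binomial additivity is elementary — but rather pinning down precisely \emph{which} tree the symbol $squash(T')$ denotes at this point (pre- or post-pruning, and how the half-the-bins clause interacts), so that the i.i.d.\ $B(dn,2d/m)$ statement is both true and exactly the hypothesis needed to invoke the earlier query-tree bounds. Once that bookkeeping is settled, the remaining work is to (i) identify the degree of a node with $\sum_{i=1}^d X'_{u'_i}$, (ii) apply binomial additivity, and (iii) verify independence across nodes from the disjointness of the underlying bin-node sets and the i.i.d.\ generation rule.
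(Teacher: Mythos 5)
Your proposal is correct and takes essentially the same approach as the paper: identify the degree of a node in $squash(T')$ with the sum $\sum_{i=1}^{d} X'_{u'_i}$ of the $d$ i.i.d.\ $B(n,2d/m)$ bin-degree variables, and invoke additivity of independent binomials with a common success probability to get $B(dn,2d/m)$. The paper's proof is a two-line version of exactly this; you add a careful and worthwhile clarification that the stated degree distribution refers to the branching structure \emph{before} the rank-based pruning (so that it can serve as the offspring distribution in Proposition~\ref{lemma:binomial}/Lemma~\ref{lemma:boundomial}, with the pruning playing the role of the query-tree rank rule), and that the ``more than $m/2$ bins'' override is dealt with separately in Lemma~\ref{lemma:T'} — a bookkeeping point the paper leaves implicit.
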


\begin{proof}
Each $v \in W'_{2j}$ has $d$ children, each with degree distributed binomially $ \sim B(n, 2d/m)$. 
 For any independent r.v.'s $Y_1, Y_2, \cdots$ where $\forall i>0$, $Y_i \sim B(n,p)$, we know that
 $\displaystyle\sum_{i=1}^q Y_i \sim B(qn,p)$.  The Claim follows. \qed
\end{proof}

We can now turn to the proof of Lemma  \ref{lemma:T'}:

\begin{proof}

As long as $|squash(T')|< m/2$, the proof of the lemma follows the proof of Lemma \ref{lemma:boundomial}
with slight modifications to constants and will therefore be omitted. 

We notice that $|squash(T')|<m/2$ w.p. at least $1-1/n^2$: the proof of  Lemma \ref{lemma:boundomial} is inductive - we show that at level $I_1$,
the size of the subtree is at most $O(\log n)$, and then bound the increase in tree size
as we move to the next level. 
By the level $I_\mygamma$, $|squash(T')|=O(\log n)$ w.p. at least $1-1/n^2$. Therefore 
it follows that $|squash(T')|<m/2$ w.p. at least $1-1/n^2$. \qed
\end{proof}

Before we can complete the proof of Lemma \ref{lemma:bipartite}, we need to define the notion of first order stochastic dominance:

\begin{definition}[First order stochastic dominance]
We say a random variable $X$ \emph{first order stochastically dominates} (\emph{dominates} for short)
a random variable $Y$ if $\Pr[X \geq a] \geq \Pr[Y \geq a]$ for all $a$ and $\Pr[X\geq a] > \Pr[Y\geq a]$ for some $a$.
If $X$ dominates $Y$, then we write $X \geq Y$.
\end{definition}
\begin{lemma}
\label{lemma:stochastic dominance}
For every $t$, $F'_t$ first-order stochastically dominates $F_t$. % w.p. at least $1-1/n^2$.
\end{lemma}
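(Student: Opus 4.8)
\textbf{Proof proposal for Lemma~\ref{lemma:stochastic dominance}.}
The plan is to prove the dominance by coupling the two stochastic processes $F$ and $F'$ so that, step by step, the tree $T$ produced by the real process is embedded inside the tree $T'$ produced by the auxiliary process; once such a coupling is in place, $|T_t|\le |T'_t|$ holds pointwise, and first-order stochastic dominance follows immediately. I would set up the coupling by revealing the randomness of both processes in the same left-to-right, depth-by-depth order in which the trees are grown, and maintain as an invariant a map $\phi$ from the nodes currently in $T$ to the nodes currently in $T'$ that respects the parent relation and preserves ranks (so that if $w\in T$ then $\phi(w)\in T'$ and $r(\phi(w))=r(w)$, with $\phi(\mathrm{parent}(w))=\mathrm{parent}(\phi(w))$ on even-to-odd steps and $\phi(\mathrm{parent}(\mathrm{parent}(w)))=\mathrm{parent}(\mathrm{parent}(\phi(w)))$ on odd-to-even steps).

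The key steps, in order, are: (i) at odd depths, every ball-vertex $v\in W_{2j}$ of $T$ gets exactly $d$ bin-children, and so does $\phi(v)\in W'_{2j}$ in $T'$; match them up arbitrarily, so this step preserves the embedding trivially. (ii) At even depths, a bin-node marked $u$ in $T$ contributes as children the balls $v\in N_G(u)$ with $r(v)<r(\mathrm{parent}(u))$ that have not yet appeared; the number of such candidates before the rank filter is $X_u\sim B(n,d/m)$ — actually $X_u-1$ new ones, since $\mathrm{parent}(u)$ is already in the tree — whereas the corresponding node $\phi(u)$ in $T'$ is given $X'_{\phi(u)}\sim B(n,2d/m)$ children before the same rank filter $r(\cdot)<r(\mathrm{parent}(\mathrm{parent}(\cdot)))$ is applied. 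Since $B(n,2d/m)$ first-order stochastically dominates $B(n,d/m)$, I can couple these two binomials so that the candidate set on the $T$ side injects into the candidate set on the $T'$ side; then apply the identical rank threshold to both (using the rank equality guaranteed by $\phi$ on the grandparent) and note that the ``already added'' deletions in $T$ only remove nodes, so the surviving $T$-children still inject into the $T'$-children. (iii) Observe that the extra ``$m/2$ bins'' safeguard in the definition of $T'$ only ever \emph{adds} $n+m$ nodes to $T'$, never removes any, so it cannot hurt the domination. Combining (i)--(iii), the invariant is maintained at every time step $t$, hence $|T_t|\le|T'_t|$ surely under the coupling, which gives $\Pr[F_t\ge a]\le\Pr[F'_t\ge a]$ for all $a$.

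The main obstacle I anticipate is step (ii): one has to be careful that the negative dependence among the $X_u$'s in the real bipartite graph does not obstruct the coupling. The point is that in the \emph{tree} $T$ (as opposed to the graph $G$) each newly expanded bin-node's neighbor count is a fresh $B(n,d/m)$ that is conditionally independent given the history, because distinct tree-nodes marked by distinct bins correspond to disjoint portions of the edge randomness, and a bin reappearing as a different tree-node is handled by the ``already added'' rule which only deletes. So the per-node binomial comparison $B(n,d/m)\preceq B(n,2d/m)$ is exactly what is needed, and the factor $2$ in $2d/m$ is the slack that absorbs the at-most-one already-present parent and any correlations; I would state this conditional-independence observation explicitly as the crux of the argument and then the binomial domination plus a standard Strassen-type coupling finishes it. A secondary subtlety is bookkeeping the rank-preservation through two levels at even depths, but this is routine once $\phi$ is defined to track grandparents correctly.
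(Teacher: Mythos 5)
Your coupling framework (embedding $T$ into $T'$ node by node via a rank-and-parent-preserving map $\phi$, then reading off $|T_t|\le|T'_t|$ pointwise) is the right shape of argument and matches what the paper is implicitly doing, and your observation (iii) about the $m/2$ safeguard is exactly the role it plays in the paper. But the heart of the lemma is the per-bin comparison in your step (ii), and there the proposal has a genuine gap.

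The problem is the claim that, conditional on the exploration history, ``each newly expanded bin-node's neighbor count is a fresh $B(n,d/m)$ that is conditionally independent.'' That is false, and in fact it is the opposite of the point. The bins share the same $n$ balls: a ball's $d$ choices are split among bins, so the degree counts $\{X_u\}_{u\in U}$ are \emph{negatively dependent}, not conditionally independent. Conditioning on the degrees and incidences of the bins already exposed in $T$ genuinely distorts the law of the next bin's degree; if conditional independence held you would only ever need $B(n,d/m)$ and the factor $2$ would be pointless. You notice this tension yourself (``the factor $2$ absorbs \ldots any correlations''), but you give no argument for \emph{how much} correlation the factor $2$ absorbs, and that quantitative bound is exactly the lemma's content. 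As stated, step (ii) asserts two inconsistent things and proves neither.

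What the paper does at this point, and what your proposal is missing, is the negative-dependence argument that pins down the worst case: because the $X_u$'s are negatively associated, the conditional law $\bar{X}=X_u\mid\{X_w : w\in T_t\cap U\}$ is stochastically largest when all the conditioned-on bins have degree $0$. If at most $m/2$ bins are so conditioned and all have degree $0$, each ball's $d$ choices fall among the remaining $\ge m/2$ bins, so a fixed new bin is hit by a given ball with probability at most $2d/m$, giving $\bar{X}\preceq B(n,2d/m)$. This is where the factor $2$ actually comes from, and it is why the process $T'$ is defined with parameter $2d/m$ and with the $m/2$-bins trigger: once more than $m/2$ bins are exposed the conditional bound no longer applies, and the process switches to the trivial blow-up so that dominance holds vacuously. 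Your coupling can be completed by replacing the conditional-independence claim with this negative-dependence worst-case bound; without it, the binomial comparison $B(n,d/m)\preceq B(n,2d/m)$ is comparing the wrong distribution on the $T$ side.
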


\begin{proof}
Assume we add a (bin) vertex $u \in U$ to $T$ at time $t$, the random variable for the number of $u$'s neighbors
is negatively dependent on all other $X_w$, $w \in T_{t} \cap U$. We label this variable $\bar{X} = X_u|\{X_w\}$, $w \in T_{t}$. 

We first show that $F'_t \geq F_t$ when $T$ has less than $m/2$ bins, and then show that $F'_t \geq F_t$ when $T'$ has more than $m/2$ bins. (It is easy to see why this is enough).\\
Assume $|T_{t}\cap U| \leq m/2$. $X_u$ is dependent on at most $m/2$ other random variables, $X_w$. Because the dependency is negative, $X_u$  is maximized when $\forall w, X_w = 0$. Therefore, in the worst case, $X_u$ is dependent on $m/2$ bins with $0$ children. If $m/2$ bins have $0$ children, all edges in $G$ must be distributed between the remaining bins. Therefore $\bar{X} \leq X'_u$, where $X'_u \sim B(n, 2d/m)$.\\
When $T'$ has more than $m/2$ bins, by the construction of $F'_t$, it has more than $m+n$ vertices, and so $F'_t$ trivially dominates $F_t$. \qed
%As we have no more information about the remaining bins,
%they all have degree distribution $X \sim B(M, 2d/n)$.
%
\end{proof}

Combining Corollary \ref{corr:T'} and Lemma \ref{lemma:stochastic dominance} completes the proof of Lemma \ref{lemma:bipartite}. 
%
%%%%
%%%%%%%%%%%%%%%%%%%%%%%%%%%%%%%%
%%%%%%
%% LOAD BALANCING CTD
%%%%%%
%%%%%%%%%%%%%%%%%%%%%%%%%%%%%%%%
%%%%
%
\subsection{Local load balancing}
The following theorem states our basic simulation result.

\begin{theorem}\label{thm:local_bb}
Consider a generic online algorithm $\mathcal{LB}$ which requires constant time per query, for $n$ balls and $m$ bins, where $n=cm$ for some constant $c>0$.
%Let $G(U, V, E)$ be an undirected bipartite graph with $|U|=m$ and $|V| = n$,
%$m = c n$ for some constant $c \geq 1$ and let each $u \in U$ be connected to $d$ vertices in $V$ chosen independently and uniformly at random.
There exists an
$(O(\log{n}), O(\log{n}), 1/n)$-local computation algorithm
which, on query of a (ball) vertex $v \in V$,
allocates $v$ a (bin) vertex $u \in U$, such that the resulting allocation is identical to that of $\mathcal{LB}$ with probability at least $1-1/n$.
\end{theorem}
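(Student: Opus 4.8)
The plan is to build the LCA for $\mathcal{LB}$ by simulating the online algorithm on demand, using the query tree from Lemma~\ref{lemma:bipartite} to bound the work. First I would set up the random ordering: as in the maximal matching case, rather than storing an explicit rank $r(u)\in[0,1]$ for every bin, I would use the construction of \cite{ARV+11} to generate an almost $O(\log n)$-wise independent ordering of the bins from a seed of length $O(\log^2 n)$ (which fits in the $O(\log n)$-word space bound when we only store the seed and recompute ranks as needed); since Lemma~\ref{lemma:bipartite} guarantees that the relevant query tree has size $O(\log n)$ w.h.p., each query only depends on $O(\log n)$ ranks, so almost $O(\log n)$-wise independence suffices to make the simulation faithful to the true online run with probability $1-1/\mathrm{poly}(n)$.

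Next I would describe the query procedure: given a ball $v_0\in V$ whose bin assignment we want, we perform the BFS described just before Lemma~\ref{lemma:bipartite}, alternately expanding ball vertices to their $d$ chosen bins and bins $u$ to those balls $v\in N(u)$ with $r(v)<r(\mathrm{parent}(u))$, stopping at leaves where $\mathrm{parent}(u)$ is the first ball to land in $u$. This produces a DAG (a subtree of the query tree $T$) whose size is at most $|T|$. We then run $\mathcal{LB}$ on the balls of this DAG in order of rank: each ball, when processed, queries the current loads (or whatever state $\mathcal{LB}$ uses) of its $d$ bins, which are exactly the bins that are its children in the BFS, and their loads are determined recursively by the balls that have already been placed. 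By the inductive structure of the query tree, computing the state of a bin at the moment $v_0$'s parent ball would arrive requires exactly the subtree rooted at that bin, so the whole recursion visits at most $|T|$ vertices. The output for $v_0$ is then the bin $\mathcal{LB}$ assigns it, and this is consistent with a single global allocation because all answers are computed from the same fixed ranks and the same input graph.

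For the complexity bounds: by Lemma~\ref{lemma:bipartite}, $\Pr[|T| \geq C(d)\log n] < 1/n^2$, so with probability at least $1-1/n^2$ each query touches $O(\log n)$ vertices; since $\mathcal{LB}$ uses constant time per ball and each vertex of the DAG has degree $d=O(1)$, the BFS plus simulation runs in $O(\log n)$ time and the DAG plus the $O(\log^2 n)$-bit ordering seed occupies $O(\log n)$ words (here I should double-check whether the seed length is $O(\log^2 n)$ bits, i.e.\ $O(\log n)$ words, or whether the $O(\log^3 n)$ seed of the matching section is really needed — the cleaner statement holds if $O(\log n)$-wise independence needs only an $O(\log n)$-word seed). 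Adding the failure probability of the query tree bound over all queries via a union bound, together with the small statistical distance introduced by the almost-$k$-wise-independent ordering, gives total error at most $1/n$, which is the claimed $\delta(n)$. This establishes the $(O(\log n), O(\log n), 1/n)$-LCA.

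The main obstacle I expect is reconciling the space bound with the seed length: the matching section uses an $O(\log^3 n)$-bit seed for an almost $O(\log n)$-wise independent ordering, which would blow the space to $O(\log^3 n)$ words rather than $O(\log n)$, so to get $O(\log n)$ space one must argue either that the bins' order can be generated from a much shorter seed in this setting, or that the theorem's space bound is measured differently (e.g.\ in words with each rank being $O(\log n)$ bits and only $O(\log n)$ ranks ever materialized). Resolving exactly which independence guarantee is needed — and how cheaply it can be produced — is the delicate point; everything else is a routine composition of Lemma~\ref{lemma:bipartite} with the generic online-to-LCA simulation template already used for maximal matching.
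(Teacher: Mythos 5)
Your simulation template and the appeal to Lemma~\ref{lemma:bipartite} match the paper's argument, and your instinct that the seed length is "the delicate point" is exactly right -- but you did not land on the resolution, which turns out to be much simpler than either of the alternatives you floated. In the setup paragraph preceding Lemma~\ref{lemma:bipartite}, the paper declares the ranks $r(\cdot)$ to be \emph{part of the input} (they are what makes $\mathcal{LB}$ deterministic, so that "identical to that of $\mathcal{LB}$" is even well-defined). The proof of Theorem~\ref{thm:local_bb} therefore generates no seed at all: it just builds the BFS tree up to a hard cap of $K = C(d)\log n$ nodes, declares a failure event if the cap is hit (probability $\le 1/n^2$ per query by Lemma~\ref{lemma:bipartite}, hence $\le 1/n$ over all queries by a union bound), and otherwise replays $\mathcal{LB}$ on the $O(\log n)$ balls found. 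The paper then treats the case you were worried about -- no ranks given, must produce an almost $O(\log n)$-wise independent ordering -- in the separate ``Random ordering'' remark, where it honestly reports that the \cite{ARV+11} construction needs an $O(\log^3 n)$-bit seed (note $k=O(\log n)$, so the seed is $O(k\log^2 n)=O(\log^3 n)$, not $O(\log^2 n)$ as you guessed), degrading the bounds to $(O(\log^3 n), O(\log^3 n), 1/n)$.

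Two smaller points. First, your BFS must be explicitly truncated at $K$ nodes to meet the LCA definition, which requires a \emph{worst-case} per-query time bound $t(n)$; without the cap, a bad ordering makes the BFS unboundedly long, and conditioning on the $1-1/n^2$ event only gives a high-probability time bound, not the required guarantee. Second, once you accept that the ranks are input, your concern about "the same fixed ranks" yielding consistency is automatically handled: all queries read the same input, so consistency is immediate and query-obliviousness holds for free.
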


\begin{proof}
Let $K=C(d) \log |U|$ for some constant $C(d)$ depending only on $d$.
$K$ is the upper bound given in Lemma~\ref{lemma:bipartite}.
(In the following we make no attempt to provide the exact values for $C(d)$ or $K$.)

We now describe our $(O(\log n), O(\log n), 1/n)$-local computation algorithm for $\mathcal{LB}$.
%The algorithm receives the following as an input:
%a graph $G(U, V, E)$, $|V| = m $, and for every $u \in U$, $r(u) \in [0,1]$.
A query to the algorithm is a (ball) vertex $v_0 \in V$ and the
algorithm will chose a (bin) vertex   from the $d$ (bin) vertices  connected to $v_{0}$.

We first build a query tree as follows:
Let $v_0$ be the root of the tree. For every $u \in N(u_0)$, add to the tree the neighbors of $u$,
$v \in V$ such that $r(v) < r(v_0)$.
Continue inductively until either $K$ nodes have been added to the random query tree
or no more nodes can be added to it.
% (because for every node $u$ in the tree, every $w \in W$,
% where $W$ is the set of $u$'s neighbors' neighbors, $r(w) > r(u)$).
If $K$ nodes have been added to the query tree, this is a failure event, and assign to $v_0$ a random bin in $N(v_0)$.
From Lemma~\ref{lemma:bipartite}, this happens with probability at most $1/n^2$,
and so the probability that some failure event will occur is at most $1/n$.
Otherwise, perform $\mathcal{LB}$ on all of the vertices in the tree,
in order of addition to the tree, and output the bin to which ball $v_0$ is assigned to by $\mathcal{LB}$. \qed
\end{proof}

A reduction from various load balancing algorithms gives us the following corollaries to Theorem \ref{thm:local_bb}.

\begin{corollary}  (Using \cite{BCS+06}) 
Suppose we wish to allocate $m$ balls into $n$ bins of uniform capacity, $m \geq n$, where each ball chooses $d$ bins independently and uniformly at random. There exists 
a $(\log{n}, \log{n}, 1/n)$ LCA which allocates the balls in such a way that the load of the most loaded bin is $m/n + O(\log\log{n}/ \log{d})$ w.h.p.
\end{corollary}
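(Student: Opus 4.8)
\medskip
\noindent\textbf{Proof plan.}
The plan is to derive this corollary directly from Theorem~\ref{thm:local_bb}, taking the $d$-choice allocation process analyzed in \cite{BCS+06} as the generic online algorithm $\mathcal{LB}$. First I would rename the two sides of the bipartite graph so that the $m$ balls play the role of $V$ and the $n$ bins play the role of $U$; the hypothesis $m\geq n$ together with $m=\Theta(n)$ then supplies the ratio $|V|=c|U|$ for a constant $c\geq 1$ that Theorem~\ref{thm:local_bb} requires. Next I would recall the process of \cite{BCS+06}: the balls arrive one at a time, each ball inspects the current loads of the $d$ bins it selected, and places itself on a least-loaded one (ties broken arbitrarily). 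Since $d$ is constant this costs $O(1)$ time per ball and the decision is a function only of the $d$ chosen bins, so it is exactly an instance of $\mathcal{LB}$.

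Then I would apply Theorem~\ref{thm:local_bb} verbatim. It yields an $(O(\log n),O(\log n),1/n)$-local computation algorithm $\mcA$ that, on a ball query $v$, outputs the bin that $\mathcal{LB}$ assigns to $v$, and (by the query-tree argument inside the proof of that theorem, which replays $\mathcal{LB}$ on the whole tree rooted at $v$ and thereby exposes every earlier ball landing in the relevant bins) all answers are simultaneously consistent with one execution of $\mathcal{LB}$, except on a failure event of probability at most $1/n$.

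Finally I would combine the two high-probability statements. By \cite{BCS+06}, a single run of the $d$-choice process on $m$ balls and $n$ bins with $m\geq n$ has maximum load $m/n+O(\log\log n/\log d)$ with probability $1-o(1)$; call this event $\mathcal{E}_1$, and let $\mathcal{E}_2$ be the event that $\mcA$ faithfully simulates $\mathcal{LB}$, which by Theorem~\ref{thm:local_bb} has probability at least $1-1/n$. On $\mathcal{E}_1\cap\mathcal{E}_2$ the allocation induced by $\mcA$ coincides with that of $\mathcal{LB}$ and hence has the stated maximum load; a union bound gives $\Pr[\mathcal{E}_1\cap\mathcal{E}_2]=1-o(1)$, which is exactly the statement of the corollary.

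The only real obstacle is the bookkeeping of hypotheses: one must check that the process of \cite{BCS+06} genuinely fits the $\mathcal{LB}$ template (online arrivals, $O(1)$ time per ball, decision depending only on the $d$ selected bins) and that its ``w.h.p.'' load guarantee is stated in a form robust enough to survive intersection with the $1-1/n$ simulation-success event. Both are routine, so the proof carries no new technical content beyond Theorem~\ref{thm:local_bb} and the bipartite query-tree bound of Lemma~\ref{lemma:bipartite} on which it rests.
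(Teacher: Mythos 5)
Your proposal is correct and matches the paper's (implicit) argument: the paper states this corollary as an immediate reduction to Theorem~\ref{thm:local_bb} by taking the $d$-choice least-loaded process of \cite{BCS+06} as the generic online algorithm $\mathcal{LB}$ and combining the two high-probability events. You are right to flag the need for $m=\Theta(n)$ so that the $n=cm$ hypothesis of Theorem~\ref{thm:local_bb} (and hence Lemma~\ref{lemma:bipartite}) is met --- this is an implicit assumption in the paper's corollary statement, and your bookkeeping makes it explicit.
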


\begin{corollary} (Using  \cite{Voc03}) 
Suppose we wish to allocate $n$ balls into $n$ bins of uniform capacity, where each ball chooses $d$ bins independently at random, one from each of $d$ groups of almost equal size $\theta (\frac{n}{d})$. There exists 
a $(\log{n}, \log{n}, 1/n)$ LCA, which allocates the balls in such a way that the load of the most loaded bin is $\ln\ln{n} / (d-1)\ln{2} +O(1)$ w.h.p.
\footnote{In fact, in this setting the tighter bound is
$\frac{\ln\ln{n}} {d\ln{\phi_d}} +O(1)$, where $\phi_d$ is the ratio of the $d$-step Fibonacci sequence, i.e. 
$\phi_d = \lim_{k \rightarrow \infty} \sqrt[k]{F_d(k)}$, 
where for $k < 0 $, $F_d(k) = 0$, $F_d(1) = 1$, and for $k\geq 1$ $F_d(k)=\sum_{i=1}^d{F_d(k-i)}$ }
\end{corollary}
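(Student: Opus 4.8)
The plan is to apply Theorem~\ref{thm:local_bb} with the generic algorithm $\mathcal{LB}$ taken to be V\"ocking's \emph{Always-Go-Left} scheme \cite{Voc03}. In that scheme the $n$ bins are split into $d$ groups of almost equal size $\Theta(n/d)$, each ball picks one bin uniformly at random from each group, and is placed in the least loaded of these $d$ bins, ties broken toward the leftmost group. This decision rule inspects only the $d$ chosen bins and their current loads, uses constant time per ball, and hence is a legitimate instance of $\mathcal{LB}$; V\"ocking shows the maximum load is $\ln\ln n/((d-1)\ln 2)+O(1)$ w.h.p., and in fact the sharper $\ln\ln n/(d\ln\phi_d)+O(1)$ quoted in the footnote. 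With $m=n$ (so $c=1$), Theorem~\ref{thm:local_bb} then yields an $(O(\log n),O(\log n),1/n)$-LCA whose output coincides with a run of V\"ocking's algorithm with probability at least $1-1/n$, so the load guarantee transfers on that event.

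The only point needing attention is that Section~\ref{section:load_balancing} assumed each ball chooses its $d$ neighbours uniformly from \emph{all} of $U$, whereas here the choices are one per group. I would check that the query-tree bound of Lemma~\ref{lemma:bipartite} still holds in the grouped model: the choice model enters its proof only through the distribution of $X_u$ (the number of balls incident to a bin $u$) and the negative dependence among the $X_u$. If $u$ sits in a group of size $\Theta(n/d)$ then $X_u\sim B(n,\Theta(1/n))$ is still binomial with constant mean; the $X_u$ within one group are still negatively dependent, and those in different groups are independent. So the coupling in Lemma~\ref{lemma:stochastic dominance} goes through after replacing the parameter $2d/m$ by a larger constant reflecting the smallest group size, and applying the ``more than half the bins'' padding per group. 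Corollary~\ref{corr:T'} and the dominance argument then give $\Pr[|T|>C(d)\log n]<1/n^2$ as before.

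With the query-tree bound in hand, the remainder is exactly the proof of Theorem~\ref{thm:local_bb}: build the rank-ordered query tree rooted at the queried ball, declare failure (assigning a uniformly random one of its $d$ bins) if it exceeds $K=C(d)\log n$ nodes --- probability at most $1/n^2$ per query, at most $1/n$ overall --- and otherwise run Always-Go-Left on the tree's vertices in rank order and output $v_0$'s bin. I expect the main obstacle to be precisely the bookkeeping of the previous paragraph: the bound $\bar X_u\le B(n,2d/m)$ used in Lemma~\ref{lemma:stochastic dominance} tacitly relies on the uniform-from-all-of-$U$ model, so one must re-derive the analogous stochastic domination per group and verify that the padding step is applied group-wise, so that negative dependence is never violated. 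Once that is settled, the corollary follows directly from Theorem~\ref{thm:local_bb} and \cite{Voc03}.
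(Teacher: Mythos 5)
Your proposal follows exactly the paper's (unwritten) reduction: instantiate Theorem~\ref{thm:local_bb} with V\"ocking's Always-Go-Left rule as $\mathcal{LB}$ and invoke the maximum-load bound of \cite{Voc03} on the high-probability event that the LCA reproduces $\mathcal{LB}$'s allocation. Your observation that Lemma~\ref{lemma:bipartite} was stated for uniform choice over all of $U$ while V\"ocking's scheme chooses one bin per group is a legitimate subtlety that the paper glosses over, and your fix is sound: within a group of size $\Theta(n/d)$ the per-bin occupancy is still $B(n,\Theta(d/n))$, the occupancies are negatively dependent within a group and independent across groups, so the coupling of Lemma~\ref{lemma:stochastic dominance} (with the padding argument applied group-wise) carries through unchanged; this is exactly the kind of bookkeeping the paper's terse statement presumes.
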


\begin{corollary} (Using  \cite{BBFN10}) 
Suppose we wish to allocate $m$ balls into $n \leq m$ bins, where each bin $i$ has a capacity $c_i$, and $\sum_i c_i = m$. Each ball chooses $d$ bins at random with probability proportional to their capacities. There exists a  $(\log{n}, \log{n}, 1/n)$ LCA which allocates the balls in such a way that the load of the most loaded bin is $2 \log\log{n}+O(1)$ w.h.p.
\end{corollary}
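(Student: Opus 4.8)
The plan is to derive this corollary directly from Theorem~\ref{thm:local_bb} by instantiating the generic online algorithm $\mathcal{LB}$ with the specific ``$d$-choice'' balls-and-bins algorithm analyzed in \cite{BBFN10}, and then invoking the performance guarantee proved there for the resulting global allocation. The point is that Theorem~\ref{thm:local_bb} already does all the heavy lifting: it produces an LCA running in $O(\log n)$ time and space that, with probability at least $1-1/n$, outputs exactly the allocation of whatever online algorithm it is fed. So the only work here is (i) checking that the \cite{BBFN10} algorithm fits the template of $\mathcal{LB}$, and (ii) quoting its makespan bound.

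First I would verify the modeling assumptions. The \cite{BBFN10} setting has $n$ bins with capacities $c_i$ summing to $m\ge n$, and each of the $m$ balls independently picks $d$ bins, where bin $i$ is picked with probability proportional to $c_i$. This is a slight generalization of the uniform $d$-choice process underlying Lemma~\ref{lemma:bipartite}: the bipartite graph $G=(\{V,U\},E)$ has $|V|=m$ balls each of degree $d$, and each bin's degree (number of incident balls) is a sum of independent indicators, hence binomially-ish distributed and, crucially, the bin-degrees remain negatively dependent (weighting the choice probabilities by capacities does not destroy negative dependence — it is still a ``balls thrown independently'' process). Thus the query-tree bound of Lemma~\ref{lemma:bipartite} applies, possibly after absorbing the capacity ratios into the constant $C(d)$, so that $|T|=O(\log n)$ with probability at least $1-1/n^2$. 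The decision rule of the \cite{BBFN10} algorithm — assign each arriving ball to the least-loaded (relative to capacity) among its $d$ choices — depends only on the current loads of those $d$ bins, so it is exactly a rule ``which can depend in an arbitrary way on the $d$ bins that the ball is assigned to,'' and it runs in constant time per ball. Hence $\mathcal{LB}$ of Theorem~\ref{thm:local_bb} may be taken to be this algorithm.

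Next, I would apply Theorem~\ref{thm:local_bb} verbatim: it yields an $(O(\log n),O(\log n),1/n)$-LCA which, on query of a ball $v_0$, returns the bin that $\mathcal{LB}$ would assign to $v_0$, with probability at least $1-1/n$ that the entire collection of answers is consistent with a single global run of $\mathcal{LB}$. Conditioned on this event, the LCA's answers realize precisely the allocation of the \cite{BBFN10} online algorithm, and therefore inherit its guarantee: the most loaded bin has load $2\log\log n + O(1)$ with high probability. A small point to dispatch is the interaction of the two ``w.h.p.'' statements — the $1-1/n$ failure probability of the simulation and the high-probability makespan bound of \cite{BBFN10} — but these simply combine by a union bound, leaving an overall failure probability of $o(1)$, which is what ``w.h.p.'' in the statement means.

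The main (and only mild) obstacle is the first step: making sure the capacity-weighted $d$-choice process genuinely satisfies the hypotheses of Lemma~\ref{lemma:bipartite}, in particular that the per-bin degree random variables are negatively dependent and that the query-tree domination argument of Section~\ref{section:load_balancing} still goes through when the edge-endpoint distribution on $U$ is non-uniform. I expect this to be routine — the domination in Lemma~\ref{lemma:stochastic dominance} only used negative dependence plus an upper bound on each bin's degree by a binomial, and both survive reweighting provided the capacities (hence the choice probabilities) are bounded within constant factors of $m/n$, which one can assume or absorb into $C(d)$ — but it is the one place where a careful reader would want the details spelled out. Everything after that is a direct quotation of Theorem~\ref{thm:local_bb} and of \cite{BBFN10}.
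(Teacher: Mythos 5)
Your proposal matches the paper's treatment exactly: the corollary is stated with no further argument beyond being a direct instantiation of Theorem~\ref{thm:local_bb} with the capacity-weighted $d$-choice algorithm of \cite{BBFN10} and its makespan guarantee, and your reduction (fit the rule into the $\mathcal{LB}$ template, invoke the simulation theorem, union-bound the two failure events) is precisely what the paper intends. The one place you pause — that Lemma~\ref{lemma:bipartite} is proved for uniform bin selection and needs the capacity ratios controlled so that no single bin's expected degree blows up, which would break the binomial domination in Lemma~\ref{lemma:stochastic dominance} — is a real point the paper leaves implicit, and your suggested remedy (absorbing bounded ratios into $C(d)$, i.e.\ inheriting the hypotheses of \cite{BBFN10}) is the intended reading.
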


\begin{corollary} (Using  \cite{BBFN10}) 
Suppose we wish to allocate $m$ balls into $n \leq m$ bins, where each bin $i$ has a capacity $c_i$, and $\sum_i c_i = m$. Assume that the size of a large bin is at least $rn \log{n}$, for large enough $r$. Suppose we have $s$ small bins with total capacity $m_s$, and that $m_s = O((n \log{n})^{2/3})$. There exists a  $(\log{n}, \log{n}, 1/n)$ LCA which allocates the balls in such a way that the expected maximum load is less than $5$.
\end{corollary}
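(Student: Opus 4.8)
The plan is to obtain this corollary as a direct instantiation of the simulation theorem, Theorem~\ref{thm:local_bb}. First I would isolate from \cite{BBFN10} the relevant online algorithm: in the capacitated setting with $n$ bins of capacities $c_i$, $\sum_i c_i=m$, each arriving ball samples $d$ bins independently with probability proportional to capacity and is placed by a decision rule that looks only at the (loads of the) $d$ sampled bins; the per-ball work is $O(1)$, and under the stated hypotheses on the large bins (capacity $\geq rn\log n$) and on the small bins (total capacity $m_s=O((n\log n)^{2/3})$) the resulting allocation has expected maximum load below $5$. Because this is precisely an instance of the generic ``power of $d$ choices'' algorithm $\mathcal{LB}$ governed by Theorem~\ref{thm:local_bb} (with the ranks playing the role of arrival times), the theorem yields an $(O(\log n),O(\log n),1/n)$-local computation algorithm whose output, with probability at least $1-1/n$, agrees with the $\mathcal{LB}$ allocation on every ball.

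The one point that is not purely mechanical is that Lemma~\ref{lemma:bipartite}, and hence Theorem~\ref{thm:local_bb} as literally stated, assumes that each ball chooses its $d$ bins \emph{uniformly}, whereas here the choice is proportional to capacity. I would discharge this in one of two interchangeable ways. (i) \emph{Slot reduction}: replace bin $i$ by $c_i$ identical ``slots'', so that a capacity-proportional choice over bins becomes a uniform choice over the $m=\sum_i c_i$ slots; then the $m$ balls make uniform $d$-choices over $m$ slots, which is exactly the hypothesis of Lemma~\ref{lemma:bipartite} (with constant $c=1$), giving an $O(\log n)$ bound on the query tree of the slot graph, while the $\mathcal{LB}$ rule is re-expressed through per-bin loads (= sums of slot loads) and the bin-graph query tree is dominated by the slot-graph one. (ii) \emph{Pruning large bins}: a bin of capacity at least $rn\log n$ receives at least that many of the $m$ balls in expectation but overflows only with probability $n^{-\omega(1)}$ by a Chernoff bound, so it is never a bottleneck and may be treated as a sink that terminates the recursion; what remains is a query tree that branches only through the small bins (total capacity $m_s=O((n\log n)^{2/3})$), whose degrees are negatively dependent and stochastically dominated by binomials of small mean, so the two-part argument of Section~\ref{section:tree_size} packaged in the proofs of Lemma~\ref{lemma:boundomial} and Lemma~\ref{lemma:bipartite} again gives an $O(\log n)$ bound per query with probability $1-1/\mathrm{poly}(n)$.

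It then remains to upgrade ``identical to $\mathcal{LB}$ with probability $\geq 1-1/n$'' to a bound on the \emph{expected} maximum load. For any ball $v$ that is not a failure event, the LCA rebuilds the full dependency tree of $v$ and replays $\mathcal{LB}$ on it, so $v$ is assigned exactly its $\mathcal{LB}$ bin; hence the LCA allocation and the $\mathcal{LB}$ allocation can disagree only on the set of failed balls, and relocating $k$ balls changes the load of any bin—hence the maximum load—by at most $k$. Since each ball fails with probability at most $1/\mathrm{poly}(n)$ and the numbers of balls and bins are within a constant factor (as required by Theorem~\ref{thm:local_bb}), the expected number of failed balls is $o(1)$, so the LCA's expected maximum load is at most $\Exp[\text{max load of }\mathcal{LB}]+o(1)<5$ for all large enough $n$. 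The main obstacle is the first item of the second paragraph—reconciling the capacity-proportional bin sampling of \cite{BBFN10} with the uniform-sampling hypothesis behind the query-tree bound; once that is settled, the rest is bookkeeping on top of Theorem~\ref{thm:local_bb}.
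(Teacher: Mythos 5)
The paper states this corollary (and the whole family of load-balancing corollaries) as a direct consequence of Theorem~\ref{thm:local_bb}, with no argument beyond ``a reduction from various load balancing algorithms.'' You correctly identify that structure and go considerably further: you flag the mismatch between the uniform bin-sampling assumed in Lemma~\ref{lemma:bipartite} and the capacity-proportional sampling in \cite{BBFN10}, and you supply the bridge from ``agrees with $\mathcal{LB}$ w.p.\ $1-1/n$'' to ``expected maximum load $<5$,'' neither of which the paper spells out. The expectation upgrade is sound (failures can only shift the max load by the number of failed balls, whose expectation is $o(1)$ since each of the $m=\Theta(n)$ balls fails w.p.\ $O(1/n^{2})$).

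The gap is in the non-uniformity fix. In the slot reduction, the bipartite graph you analyze connects balls to \emph{slots}, so the query tree at a slot node only recruits balls that chose that particular slot. But the $\mathcal{LB}$ decision rule compares \emph{bin} loads, i.e.\ sums over all $c_i$ slots of a bin, so the true dependency at a bin node recruits every ball that chose \emph{any} slot of that bin. The slot-graph query tree therefore \emph{underestimates} the real dependency structure; the domination you assert runs in the wrong direction, and a large bin (capacity $\ge rn\log n$) would in fact have an enormous true degree. The pruning variant is closer in spirit but, as stated, rests on an unverified property of the \cite{BBFN10} decision rule (that a non-overflowing large bin can be treated as a sink without consulting its exact load) and on an unproved degree bound for the small bins (their total capacity is $m_s=O((n\log n)^{2/3})$, but a single small bin could carry most of $m_s$ and attract $\Theta(d\,m_s)$ choosing balls in expectation). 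Both routes need more work to close a gap that, to be fair, the paper itself leaves implicit.
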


\begin{corollary} (Using  \cite{BCM03}) 
Suppose we have $n$ bins, each represented by one point on a circle, and $n$ balls are to be allocated to the bins. Assume each ball needs to choose $d\geq2$ points on the circle, and is associated with the bins closest to these points. There exists a  $(\log{n}, \log{n}, 1/n)$ LCA which allocates the balls in such a way that the load of the most loaded bin is $\ln\ln{n}/ \ln{d} + O(1)$ w.h.p.
\end{corollary}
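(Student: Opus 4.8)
The plan is to realize the consistent-hashing scheme of \cite{BCM03} as an instance of the generic ``power of $d$ choices'' algorithm $\mathcal{LB}$ and then invoke Theorem~\ref{thm:local_bb}. Concretely, take $V$ to be the $n$ balls and $U$ to be the $n$ bins, so $c=1$; the $d$ points chosen by a ball on the circle determine its $d$ incident bins (w.h.p. these are distinct), and the decision rule of \cite{BCM03} -- place the ball in the least loaded of these $d$ bins, breaking ties in some fixed way, e.g.\ clockwise order -- depends only on the current loads of those $d$ bins and is computable in $O(1)$ time per query. Thus, once the circle positions of the bins and the $d$ points of each ball are regarded as part of the static input (the edges of the bipartite graph together with the ranks $r(\cdot)$), $\mathcal{LB}$ is deterministic in exactly the sense required in Section~\ref{section:load_balancing}.

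Next I would check that the query-tree bound of Lemma~\ref{lemma:bipartite} still applies here. The one point that does not match the setup of Section~\ref{section:load_balancing} verbatim is that a ball's $d$ neighbours are \emph{not} chosen uniformly among the bins: a uniformly random point on the circle selects bin $u$ with probability equal to the length $\ell_u$ of the arc that $u$ ``owns'', and the $\ell_u$ are themselves random. However, conditioning on the bin positions, a standard balls-into-arcs estimate shows that with probability at least $1-1/n^3$ every arc satisfies $\ell_u = O(\log n / n)$, so the number of balls incident to a fixed bin is stochastically dominated by $B(n, O(\log n)/n)$. Re-running the stochastic-domination argument of Lemma~\ref{lemma:stochastic dominance} with this (slightly larger) binomial in place of $B(n,2d/m)$ -- the extra $\log n$ factor only affects the constant $C(d)$ and the internal constants -- yields $\Pr[|T|\geq C(d)\log n] < 1/n^2$ for the query tree rooted at any ball, exactly as in Lemma~\ref{lemma:bipartite}.

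With the query-tree bound in hand, Theorem~\ref{thm:local_bb} gives an $(O(\log n), O(\log n), 1/n)$-local computation algorithm which, on query of a ball $v_0 \in V$, outputs the bin assigned to $v_0$ by $\mathcal{LB}$ with probability at least $1-1/n$, and whose answers are by construction consistent with a single global run of $\mathcal{LB}$. Finally, \cite{BCM03} proves that this online allocation has maximum load $\ln\ln n / \ln d + O(1)$ with probability $1-o(1)$; a union bound with the $\le 1/n$ failure probability of the simulation shows the LCA produces such an allocation w.h.p. This is precisely the claimed $(\log n, \log n, 1/n)$ LCA.

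The main obstacle is the non-uniformity just noted: Lemmas~\ref{lemma:bipartite} and \ref{lemma:stochastic dominance} are stated for the uniform choice model, so one must either re-derive them with the weighted (arc-length) selection probabilities, or, as above, first condition on a ``good'' bin placement and dominate by a uniform model with inflated parameter $p = O(\log n)/n$. Either way the extra work is routine and costs only a change in the constants; everything else is a direct appeal to Theorem~\ref{thm:local_bb} and to the load analysis of \cite{BCM03}.
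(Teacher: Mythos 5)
You correctly identify the gap: the paper appeals to Theorem~\ref{thm:local_bb}, whose query-tree bound (Lemma~\ref{lemma:bipartite}, via Proposition~\ref{lemma:binomial}) is proved for the \emph{uniform} $d$-choice model, whereas in \cite{BCM03} a ball lands in bin $u$ with probability proportional to the arc length $\ell_u$. But the fix you elaborate does not work. Dominating every bin's degree by $B(n,O(\log n)/n)$ raises the expected offspring count in the Galton--Watson process of Proposition~\ref{lemma:binomial} from $d/\mygamma$ to $\Theta(\log n)/\mygamma$. With $\mygamma$ held constant the process becomes supercritical and the level-$i$ subtree is unbounded with constant probability; keeping it subcritical forces $\mygamma=\Theta(\log n)$, and then the level recursion of Proposition~\ref{infinite} applied across $\mygamma$ levels yields a bound of order $(2\eta)^{\mygamma}\log n = n^{\Theta(1)}$ rather than $O(\log n)$. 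So ``the extra $\log n$ factor only affects the constant $C(d)$'' is not correct; this route loses the theorem.

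The viable route is the alternative you mention but dismiss as equivalent: work directly with the arc-length distribution rather than with its worst-case value. Since the arc lengths are spacings of $n$ uniform points on the circle, $\Exp[\ell_u]=1/n$ with an exponentially decaying tail $\Pr[\ell_u>t/n]\approx e^{-t}$, so the \emph{marginal} offspring law of a bin in the query tree has mean $\approx d$ (not $\Theta(\log n)$) and an exponential tail. Such a law still has a generating function with positive radius of convergence, so Otter's theorem (Theorem~\ref{thm:Otter}) applies with a constant $\mygamma>d$ exactly as in the binomial case, and negative association of the $\ell_u$'s (they sum to $1$) plays the role of the negative dependence used in Lemma~\ref{lemma:stochastic dominance}. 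With that replacement, the rest of your argument — realizing the decision rule as a constant-time $\mathcal{LB}$, invoking Theorem~\ref{thm:local_bb}, and union-bounding with the load guarantee of \cite{BCM03} — goes through and recovers the stated corollary.
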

%Due to lack of space, two more corollaries have been moved to Appendix \ref{corollaries}.

\subsection{Random ordering}
In the above we assume that we are given a random ranking for each ball.
If we are not given such random rankings
(in fact, a random permutation of the vertices in $U$ will also suffice),
we can generate a random ordering of the balls.
Specifically, since w.h.p. the size of the random query is $O(\log n)$,
 an \emph{$O(\log n)$-wise independent random ordering}\footnote{
 See~\cite{ARV+11} for
the formal definitions of $k$-wise independent random ordering and almost $k$-wise independent random ordering.}
suffices for our local computation purpose.
Using the construction in~\cite{ARV+11} of
$1/n^2$-almost $O(\log n)$-wise independent random ordering over the vertices in $U$
which uses space $O(\log^{3}n)$, we obtain
$(O(\log^3 n), $ $O(\log^3 n),$ $ 1/n)$-local
computation algorithms for balls and bins.

\newpage

\bibliographystyle{plain}
\bibliography{makespan-Bib}
\newpage

\appendix

%
%%%%
%%%%%%%%%%%%%%%%%%%%%%%%%%%%%%%%
%%%%%%
%% APP HYPERGRAPH
%%%%%%
%%%%%%%%%%%%%%%%%%%%%%%%%%%%%%%%
%%%%
%
\newcommand{\polylog}[1]{\mathrm{polylog}(#1)}
\newcommand{\ttwo}{\log n}

\section{Hypergraph two-coloring}\label{Sec:hypergraph}
\label{app_hypergraph}
Recall that a \emph{hypergraph} $H$ is a pair $H = (V,E)$ where $V$ is a finite set whose elements are
called \emph{nodes} or \emph{vertices}, and $E$ is a family of non-empty subsets of $V$, 
called \emph{hyperedges}. 
A hypergraph is called \emph{$k$-uniform} if each of its
hyperedges contains precisely $k$ vertices.
A \emph{two-coloring} of a hypergraph $H$ is a mapping $\mathbf{c}: V\to \{\text{red, blue}\}$
such that no hyperedge in $E$ is monochromatic.
If such a coloring exists, then we say $H$ is \emph{two-colorable}.
We assume that each
hyperedge in $H$ intersects at most $d$ other hyperedges.
Let $n$ be the number of hyperedges in $H$. Here we think of $k$ and $d$ as fixed constants
and all asymptotic forms are with respect to $n$.
By the Lov{\'{a}}sz Local Lemma,
when $e(d+1) \leq 2^{k-1}$, the hypergraph $H$ is
two-colorable (e.g. \cite{Alo91}).

Following~\cite{RTVX11b}, we let $m$ be the total number of vertices in $H$. 
Note that $m\leq kn$, so $m=O(n)$.
For any vertex $x\in V$, we use $\mathcal{E}(x)$ to denote the set of hyperedges $x$ belongs to.
For any hypergraph $H = (V,E)$,
we define a \emph{vertex-hyperedge incidence matrix} $\mathcal{M}\in \{0,1\}^{m\times n}$
so that, for every vertex $x$ and every hyperedge $e$, 
$\mathcal{M}_{x,e}=1$ if and only if $e\in \mathcal{E}(x)$.
Because we assume both $k$ and $d$ are constants, 
the incidence matrix $\mathcal{M}$ is necessarily very sparse. 
Therefore, we further assume that the matrix $\mathcal{M}$ is implemented via
linked lists for each row (that is, vertex $x$) and each column (that is, hyperedge $e$). 

Let $G$ be the \emph{dependency graph} of the hyperedges in $H$. 
That is, the vertices of the undirected graph $G$
are the $n$ hyperedges of $H$ and a hyperedge $E_{i}$ is connected to
another hyperedge $E_{j}$ in $G$ if $E_{i}\cap E_{j} \neq \emptyset$.
It is easy to see that if the input hypergraph is given in the 
above described representation, then we can find all the neighbors of any hyperedge $E_{i}$ 
in the dependency graph $G$ (there are at most $d$ of them) in constant time (which depends on $k$ and $d$).

%\subsection{Our main result}
A natural question to ask is:
Given a two-colorable hypergraph $H$, and a vertex $v$,
can we \emph{quickly} compute the coloring of $v$?
Alon et al. gave (\cite{ARV+11}) a $\mathrm{polylog}(n)$-time and space LCA
based on Alon's 3-phase parallel hypergraph coloring algorithm (\cite{Alo91}),
where the exponent of the logarithm depends on $d$.
We get rid of the dependence on $d$ (in the exponent of the logarithm)
using the improved analysis of the query tree in section \ref{section:tree_size},
together with a modified 4-phase coloring algorithm.

Our main result in this section is,
given a two-colorable hypergraph $H$ whose two-coloring scheme is guaranteed by
the Lov{\'{a}}sz Local Lemma (with slightly weaker parameters),
we give a $(O(\log^4{n}), O(\log^4{n}), 1/n)$ - local computation algorithm.
We restate our main theorem:

\begin{theorem}
Let $H$ be a $k$-uniform hypergraph s.t. each hyperedge intersects at most $d$ other hyperedges.
Suppose that $k \geq 16 \log{d} + 19$. \\
Then there exists an $(O(\log^4{n}), $ $O(\log^4{n}), 1/n)$-local computation algorithm which, given $H$ and any sequence of
queries to the colors of vertices $(x_1, $ $x_2, \ldots, x_s)$, 
with probability at least $1-1/n^2$,
returns a consistent coloring for all $x_i$'s which 
agrees with a $2$-coloring of $H$. 
Moreover, the algorithm is query oblivious and parallelizable. 
\end{theorem}

In fact, we only need:
$$ k \geq 3 \lceil \log{16d(d-1)^{3}(d+1)} \rceil + 
			\lceil \log{2e(d+1)} \rceil $$
Throughout the following analysis, we set: 
$ k_1 = k $, and $$ k_i = k_{i-1} - \lceil \log{16d(d-1)^{3}(d+1)} \rceil$$
Notice that the theorem's premise simply implies that $ 2^{k_4-1} \geq e(d+1) $, as required by the Lov{\'{a}}sz Local Lemma.

\subsection{The general phase - random coloring}
In each phase we begin with subsets $V_i$ and $E_i$ of $V$ and $E$,
such that each edge contains at least $k_i$ vertices.
%for $k_i \geq k_{i+1} + \lceil \log{16d(d-1)^{3}(d+1)} \rceil $
We sequentially assign colors at random to the vertices, as long as every monochromatic edge
has at least $k_{i+1}$ uncolored vertices.
Once the phase is over we do not change this assignment.

If an edge has all of its vertices besides $k_{i+1}$ colored in one color, it is labeled \emph{dangerous}. 
All the uncolored vertices in a dangerous edge are labeled \emph{saved} and we do not color them in this phase.
We proceed until all vertices in $V_i$ are either red, blue, or saved.
Let the \emph{survived} hyperedges be all the edges that do not contain both red and blue vertices.
Each survived edge contains some vertices colored in one color, and at least $k_{i+1}$ saved vertices.

Let $S_i$ be the set of survived edges after a random coloring in Phase $i$, and consider $G|_{S_i}$, the restriction of $G$ to $S_i$
The probability that $G|_{S_i}$ contains a connected component of size $d^3u$ at most $|V_i|2^{-u}$ (\cite{Alo91}).
In particular, after repeating the random coloring procedure $t_i$ times, 
there is no connected component of size greater than $d^3u$ with probability
$$\left(|V_i|2^{-u}\right)^{t_i}$$

If the query vertex $x$ has been assigned a color in the $i$-th phase, we can simply return this color.
Otherwise, if it is a saved vertex we let $C_i(x)$ be the connected component containing $x$ in $G|_{S_i}$.
Finally, since the coloring of $C_i(x)$ is independent of all other uncolored vertices,
we can restrict ourselves to $E_{i+1} = C_i(x)$ in the next phase.

%%%%%%%%%%%%%%%%%%%%%%%%%%%%%%%%%%%%%%%%%%%%%%%%%%%%%%%%%%%%%%%%%%%%%%%%%%%%%%%%%%%%%%%%%%%%%%%%%%%%%

\begin{figure*}
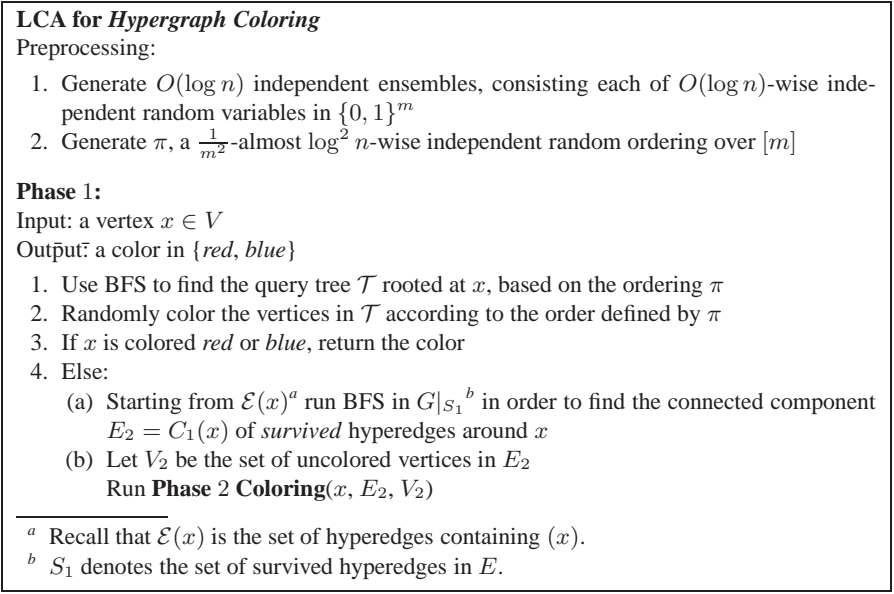

\begin{center}

\fbox{
\begin{minipage}{4.5in}
\small
%\begin{tabbing}
\textbf{LCA for \emph{Hypergraph Coloring}}\\
Preprocessing:
\vspace{-2mm}
\begin{enumerate}
\item Generate $O(\ttwo)$ independent ensembles, consisting each of 
	$O(\log{n})$-wise independent random variables in $\{0,1\}^{m}$
\item Generate $\pi$, a $\frac{1}{m^{2}}$-almost $\log^2{n}$-wise independent random ordering over $[m]$
\end{enumerate}
\textbf{Phase $1$:}\\
Input: a vertex $x \in V$\\
Out\=put\=: a color in \{\emph{red}, \emph{blue}\}
\vspace{-2mm}
\begin{enumerate}

\item Use BFS to find the query tree $\mathcal{T}$ rooted at $x$, based on the ordering $\pi$
\item Randomly color the vertices in $\mathcal{T}$ according to the order defined by $\pi$
\item If $x$ is colored \emph{red} or \emph{blue}, return the color
\item Else:
\begin{enumerate}
\item Starting from $\mathcal{E}(x)$\footnote{
  Recall that $\mathcal{E}(x)$ is the set of hyperedges containing $(x)$.} 
  run BFS in $G|_{S_1}$\footnote{
  $S_{1}$ denotes the set of survived hyperedges in $E$.}
   in order to find the connected component
   $E_2=C_{1}(x)$ of \emph{survived} hyperedges around $x$
\item Let $V_2$ be the set of uncolored vertices in $E_2$ \\
	 Run \textbf{Phase $2$ Coloring}($x$, $E_2$, $V_2$)
\end{enumerate}
\end{enumerate}
%\end{tabbing}
\end{minipage}
}
\end{center}
\caption{Local computation algorithm for \emph{Hypergraph Coloring}}
\label{Fig:coloring}
\end{figure*}

%%%%%%%%%%%%%%%%%%%%%%%%%%%%%%%%%%%%%%%%%%%%%%%%%%%%%%%%%%%%%%%%%%%%%%%%%%%%%%%%%%%%%%%%%%%%%%%%%%%%%

\subsection{Phase 1: partial random coloring}
In the first phase we begin with the whole hypergraph, i.e. $V_1 = V$, $E_1 = E$, and $k_1 = k$. 
Thus, we cannot even assign a random coloring to all the vertices in sublinear complexity.
Instead, similarly to the previous sections, we randomly order the vertices of the hypergraph and use a query tree
to randomly assign colors to all the vertices that arrive before $x$ and may influence it.
Note that this means that we can randomly assign the colors only once.

If $x$ is a saved vertex, we must compute $E_2 = C_1(x)$, 
the connected component containing $x$ in $G|_{S_1}$.
Notice that the size $C_1(x)$ is bounded w.h.p.
$$\Pr\left[ |C_1(x)| > 4d^3\log n \right] < n2^{-4\log n} = n^{-3}$$

In order to compute $C_1(x)$, we run BFS on $G|_{S_1}$.
Whenever we reach a new node, we must first randomly assign colors to the vertices in its query tree,
like we did for $x$'s query tree.
Since (w.h.p.) there are at most $O(\log n)$ edges in $C_1(x)$, we query for trees of vertices in at most $O(\log n)$ edges.
Therefore in total we color at most $O\left(\log^2(n)\right)$ vertices.

Finally, since we are only interested in $O\left(\log^2(n)\right)$ vertices,
we may consider a coloring which is only $O\left(\log^2(n)\right)$-wise independent,
and a random ordering which is only $n^{-3}$-almost $O\left(\log^2(n)\right)$-independent.
Given the construction in ~\cite{ARV+11}, this can be done in space and time complexity $O\left(\log^4(n)\right)$

%%%%%%%%%%%%%%%%%%%%%%%%%%%%%%%%%%%%%%%%%%%%%%%%%%%%%%%%%%%%%%%%%%%%%%%%%%%%%%%%%%%%
\begin{figure*}
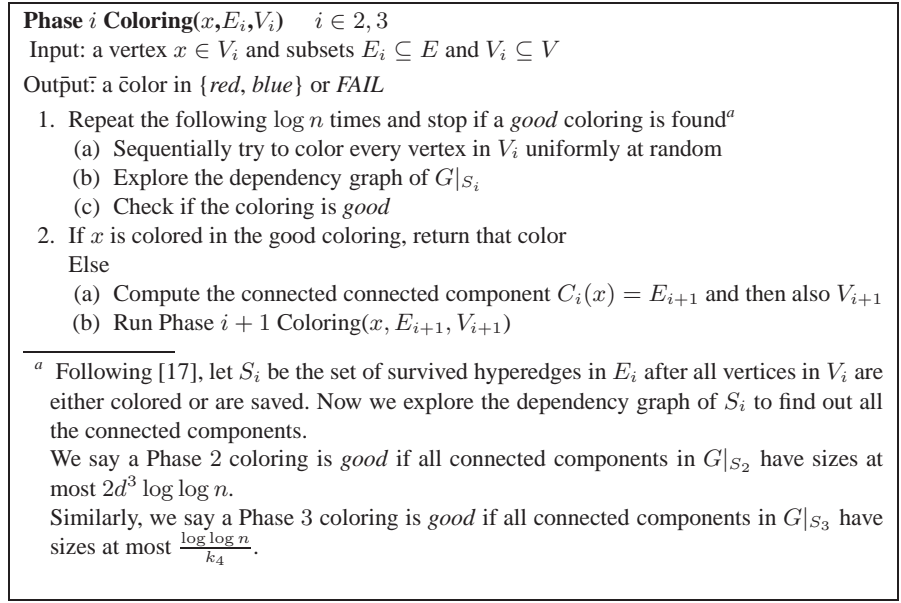

\begin{center}
\fbox{
\begin{minipage}{4.5in}
\small
%\begin{tabbing}
\textbf{Phase $i$ Coloring($x$,$E_i$,$V_i$)} \hspace{0.1in} $i \in {2,3}$ \\
\vspace{0.7mm}
Input: a vertex $x \in V_i$ and subsets $E_i \subseteq E$ and $V_i \subseteq V$\\
Out\=put\=: a \=color in \{\emph{red}, \emph{blue}\} or \emph{FAIL}
\vspace{-2mm}
\begin{enumerate}
\item Repeat the following $\ttwo$ times and stop if a \emph{good} coloring is found\footnote{
	Following~\cite{RTVX11b}, let $S_{i}$ be the set of survived hyperedges in $E_i$
	after all vertices in $V_i$ are either colored or are saved.
	Now we explore the dependency graph of $S_{i}$ to find out all
	the connected components.\\
	We say a Phase $2$ coloring is \emph{good} if all connected components 
	in $G|_{S_2}$ have sizes at most $2d^3\log\log n$.\\  
	Similarly, we say a Phase $3$ coloring is \emph{good} if all connected components 
	in $G|_{S_3}$ have sizes at most $\frac{\log\log n}{k_4}$.\\ }
\begin{enumerate}
  	\item Sequentially try to color every vertex in $V_i$ uniformly at random
  	\item Explore the dependency graph of $G|_{S_{i}}$
  	\item Check if the coloring is \emph{good}
\end{enumerate}
\item If $x$ is colored in the good coloring, return that color\\
	Else 
	\begin{enumerate}
	\item Compute the connected connected component $C_i(x) = E_{i+1}$ and then also $V_{i+1}$
	\item Run Phase $i+1$ Coloring($x, E_{i+1}, V_{i+1}$)
	\end{enumerate}  
\end{enumerate}  

%\end{tabbing}
\end{minipage}
}
\end{center}
\caption{Local computation algorithm for \emph{Hypergraph Coloring}:  Phase $2$ and Phase $3$}
\label{Fig:Phase2}
\end{figure*}

%%%%%%%%%%%%%%%%%%%%%%%%%%%%%%%%%%%%%%%%%%%%%%%%%%%%%%%%%%%%%%%%%%%%%%%%%%%%%%%%%%%%%%%%%%%%%%%%%%%%%

\subsection{Phase 2 and 3: gradually decreasing the component size}
Phase 2 and 3 are simply iterations of the general phase with parameters as described below.
With high probability we have that $|E_2| \leq 4d^3\log n$,
and each edge has $k_2$ uncolored vertices.
After at most $t_2 = \ttwo$ repetitions of the random coloring procedure, 
we reach an assignment that leaves a size $2d^3\log\log n$-connected component of survived edges with probability
$$\left( \left(4d^3k_2\log n\right) 2^{-2\log\log n} \right)
					^{\ttwo} < n^{-3}$$
Similarly, in the third phase we begin with $|E_3| < 2d^3\log\log n$, and after $t_3 = \log n$ repetitions
we reach an assignment that leaves a size $\frac{\log\log n}{k_4}$-connected component of survived edges with probability
$$\left( \left(2d^3k_3\log\log n\right) 2^{-\frac{\log\log n}{d^3k_4}}\right)
					^{\log n} < n^{-3}$$

%%%%%%%%%%%%%%%%%%%%%%%%%%%%%%%%%%%%%%%%%%%%%%%%%%%%%%%%%%%%%%%%%%%%%%%%%%%%%%%%%%%%
\begin{figure*}
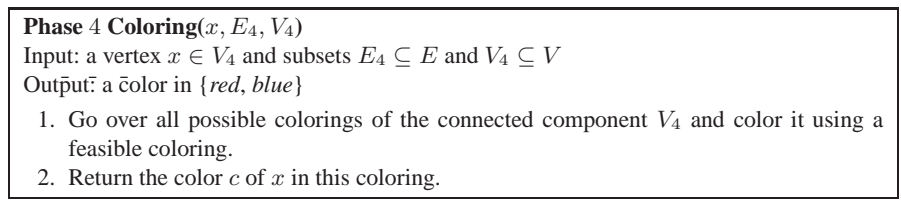

\begin{center}
\fbox{
\begin{minipage}{4.5in}
\small
%\begin{tabbing}
\textbf{Phase $4$ Coloring($x, E_4, V_4$)} \\
Input: a vertex $x \in V_4$ and subsets $E_4 \subseteq E$ and $V_4 \subseteq V$\\
Out\=put\=: a \=color in \{\emph{red}, \emph{blue}\}
\vspace{-2mm}
\begin{enumerate}

\item Go over all possible colorings of the connected component $V_4$
	and color it using a feasible coloring.
\item Return the color $c$ of $x$ in this coloring.

\end{enumerate}
%\end{tabbing}
\end{minipage}
}
\end{center}
\caption{Local computation algorithm for \emph{Hypergraph Coloring}: Phase $4$}
\label{Fig:Phase4}
\end{figure*}

%%%%%%%%%%%%%%%%%%%%%%%%%%%%%%%%%%%%%%%%%%%%%%%%%%%%%%%%%%%%%%%%%%%%%%%%%%%%%%%%%%%%%%%%%%%%%%%%%%%%%

\subsection{Phase 4: brute force}
Finally, we are left with a connected component of $|E_4| < \frac{\log\log n}{k_4}$, 
and each edge has $k_4$ uncolored vertices. 
By the Lovasz Local Lemma, there must exists a coloring (see e.g. Theorem 5.2.1 in ~\cite{Alo91}).
We can easily find this coloring via brute force search in time $O(\log n$).

%%%%%%%%%%%%%%%%%%%%%%%%%%%%%%%%%%%%%%%%%%%%%%%%%%%%%%%%%%%%%%%%%%%%%%%%%%%%%%%%%%%%%%%%%%%%%%%%%%%%%
%%%%%%%%%%%%%%%%%%%%%%%%%%%%%%%%%%%%%%%%%%%%%%%%%%%%%%%%%%%%%%%%%%%%%%%%%%%%%%%%%%%%%%%%%%%%%%%%%%%%%

\section{\texorpdfstring{$k$-CNF}
{k-CNF}}\label{kcnf}

As another application, our hypergraph coloring algorithm can be easily modified to 
compute a satisfying assignment of a $k$-CNF formula, 
provided that the latter satisfies some specific properties.

Let $H$ be a $k$-CNF formula on $m$ Boolean variables $x_{1}, \ldots, x_{m}$.
Suppose $H$ has $n$ clauses $H=A_{1} \wedge \cdots \wedge A_{n}$
and each clause consists of exactly $k$ distinct literals.\footnote{
Our algorithm works for the case that each clause has at least $k$ literals; 
for simplicity, we assume that all clauses have uniform size.}
We say two clauses $A_{i}$ and $A_{j}$ \emph{intersect} 
with each other if they
share some variable (or the negation of that variable).
As in the case for hypergraph coloring, $k$ and $d$ are fixed constants
and all asymptotics are with respect to the number of clauses $n$ (and hence $m$, since $m\leq kn$).
Our main result is the following.

\begin{theorem}
Let $H$ be a $k$-CNF formula with $k\geq 2$. 
Suppose that each clause intersects no more than $d$ other clauses,
and furthermore suppose that $k \geq 16 \log{d} + 19$.\\
Then there exists a $(O(\log^4{n}), O(\log^4{n}), 1/n)$-local computation algorithm which, given a formula $H$ and any sequence of
queries to the truth assignments of variables $(x_1, x_2, \ldots, x_s)$, 
with probability at least $1-1/n^2$,
returns a consistent truth assignment for all $x_i$'s which agrees with some 
satisfying assignment of the $k$-CNF formula $H$. 
Moreover, the algorithm is query oblivious and parallelizable. 
\end{theorem}

\begin{sketch}
We follow a 4-phase algorithm similar to that of 
hypergraph two-coloring as presented in appendix \ref{Sec:hypergraph}.
In every phase, we sequentially assign random values to a subset of the remaining variables,
maintaining a threshold of $k_i$ unassigned variables in each unsatisfied clause.
Since the same (in fact, slightly stronger) bounds that hold for the connected components
in the hyperedges dependency graph also hold for the clauses dependency graph (\cite{RTVX11b}),
we can return an answer which is consistent with a satisfying assignment with probability at least $1-1/n^2$.
\end{sketch}

%
%%%%
%%%%%%%%%%%%%%%%%%%%%%%%%%%%%%%%
%%%%%%
%% LOWER BOUND
%%%%%%
%%%%%%%%%%%%%%%%%%%%%%%%%%%%%%%%
%%%%
%

\section{Lower bound on the size of the query tree}
\label{section:lower_bound}
We prove a lower bound on the size of the query tree.
\begin{theorem}
\label{thm:lower_bound}
Let $G$ be a random graph whose vertex degree is bounded by $d \geq 2$ or distributed independently and identically from the binomial distribution:  $deg(v) \sim B(n,d/n)$ ($d \geq 2$). Then
\begin{center}
$\Pr[|T|>\log{n} / \log\log{n}]>1/n$,
\end{center}
where the probability is taken over all random permutations $\pi \in \Pi$ of the vertices, and $T$ is the largest query tree in $G$ (under $\pi$).
\end{theorem}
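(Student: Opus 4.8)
The plan is to exhibit, with probability at least $1/n$, a single long "decreasing path" emanating from some vertex in $G$, since any such path is contained in the largest query tree $T$ and contributes to $|T|$. Concretely, fix any vertex $v_0$ and consider a simple path $v_0, v_1, \ldots, v_\ell$ in $G$ of length $\ell = \lceil \log n / \log\log n \rceil$; in the bounded-degree case such a path exists whenever $d \ge 2$ and $n$ is large (a BFS from $v_0$ reaches depth $\ell$ since $d^\ell \le n$ fails only for trivially small $n$ — more carefully, one uses that a bounded-degree graph on $n$ vertices with $d\ge 2$ and enough edges contains a path of length $\Omega(\log n)$; alternatively restrict attention to graphs that are worst-case, as the theorem quantifies over the existence of a bad $G$). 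For the binomial case, $G \sim G(n, d/n)$ with $d \ge 2$ is supercritical and w.h.p. has a path of length $\Omega(\log n) \gg \ell$, so such a path exists with probability $1 - o(1)$.

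Given such a path, the key observation is that if the ranks satisfy $r(v_0) > r(v_1) > \cdots > r(v_\ell)$, then every $v_i$ lies in the query tree rooted at $v_0$ (each $v_i$ is a neighbor of $v_{i-1}$ with smaller rank, so it is added when we grow $T$ from $v_0$). Hence on this event $|T| \ge \ell + 1 > \log n/\log\log n$. The ranks are i.i.d. uniform, so the probability that a fixed ordering of the $\ell+1$ values along the path is the decreasing one is exactly $1/(\ell+1)!$. The final step is the estimate
\[
(\ell+1)! \le (\ell+1)^{\ell+1} \le \left(\tfrac{\log n}{\log\log n} + 2\right)^{\tfrac{\log n}{\log\log n}+2} < n,
\]
which holds for all sufficiently large $n$: taking logarithms, $(\ell+1)\log(\ell+1) \le \tfrac{\log n}{\log\log n}\cdot(\log\log n - \log\log\log n + o(1)) = \log n - o(\log n) < \log n$. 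Therefore $\Pr[|T| > \log n/\log\log n] \ge 1/(\ell+1)! > 1/n$, as claimed.

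The main obstacle — and the only place requiring care — is guaranteeing the existence of the length-$\ell$ path in the graph before we condition on the ranks. In the bounded-degree statement the theorem only asserts existence of a bad $G$, so it suffices to pick $G$ to contain a path of length $\ell$ (e.g. $G$ itself a path, or a path plus arbitrary bounded-degree structure); then the rank argument above goes through unconditionally and the probability is exactly over $\pi$. In the binomial-degree setting one combines the two independent sources of randomness: with probability $1-o(1)$ over the edges a long path exists, and conditioned on any such path, with probability $\ge 1/(\ell+1)!$ over the ranks the path is rank-decreasing; multiplying, the overall probability is at least $(1-o(1))/(\ell+1)! > 1/n$ for large $n$. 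A minor subtlety is that $T$ in the theorem is the \emph{largest} query tree over choices of root, which only helps us — we are free to choose the root to be the endpoint of our path.
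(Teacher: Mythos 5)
Your proof is correct and follows essentially the same approach as the paper: exhibit a path of length roughly $\log n/\log\log n$, observe that the event that the ranks decrease along the path (probability $1/(\ell+1)!$) forces the whole path into the query tree, and bound $(\ell+1)! < n$. The paper's version is terser (it does not spell out the combination of the two sources of randomness or the freedom to choose the root), but the idea and the estimates are the same.
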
\begin{proof}

For both the bounded degree and the binomial distribution cases, there exists a path of length at least $k = \log{n} / \log\log{n}$ in the graph  w.h.p. Label the vertices on the path $v_1, v_2, \ldots, v_k$. There are $k!$ possible permutations of the weights of the vertices on the path. The probability of choosing the permutation in which $w(v_1)<w(v_2)<\ldots<(v_k)$ is $1/k!$.
\begin{align*}
k! & = (\log{n} / \log\log{n})!\\
& < (\log{n} / \log\log{n})^{\log{n} / \log\log{n}}\\
&<n.
\end{align*}
Therefore, $1/k!>1/n$ and so the probability of the query tree having size\\ $\log{n} / \log\log{n}$ is at least $1/n$. \qed

\end{proof}

\end{document}